\newcommand{\blind}{1}
\newtheorem{theorem}{Theorem}
\newtheorem{res}{Result}
\newcommand{\E}[1]{\mathrm{E}\left[#1\right]}
\newcommand{\Var}[1]{\mathrm{Var}\left[#1\right]}
\newcommand{\imp}[1]{\mathcal{I}(\bm{#1})}
\newcommand{\impf}[2]{\mathcal{I}_{#1}(\bm{#2})}
\newcommand{\xstar}{\x^*}
\newcommand{\x}{\bm{x}}
\newcommand{\X}{\bm{X}}
\newcommand{\F}{\bm{F}}
\newcommand{\XA}{\X^A}
\newcommand{\FA}{\F^A}
\newcommand{\XU}{\X^U}
\newcommand{\FU}{\F^U}
\newcommand{\kpar}{\mathcal{K}_{par}}
\begin{document}

\def\spacingset#1{\renewcommand{\baselinestretch}%
{#1}\small\normalsize} \spacingset{1}

%%%%%%%%%%%%%%%%%%%%%%%%%%%%%%%%%%%%%%%%%%%%%%%%%%%%%%%%%%%%%%%

\if1\blind
{
  \title{\bf Feature calibration for computer models}
  \author{Wenzhe Xu \thanks{
    The authors gratefully acknowledge \textit{please remember to list all relevant funding sources in the unblinded version}}\hspace{.2cm}\\
     School of Science, Beijing University of Posts and Telecommunications, 
     \\Beijing 100876, China. wenzhexu@bupt.edu.cn\\
    Daniel B. Williamson \\
    Department of Mathematical Sciences, University of Exeter, UK \\ and The Alan Turing Institute, The British Library, London, UK\\
    Frederic Hourdin \\
    LMD-IPSL, Sorbonne University, CNRS, Paris, France\\
    and \\
    Romain Roehrig\\
    CNRM, Universit\'e de Toulouse, M\'et\'eo-France, CNRS, Toulouse, France
   }
  \maketitle
} \fi

\if0\blind
{
  \bigskip
  \bigskip
  \bigskip
  \begin{center}
    {\LARGE\bf Feature calibration for computer models}
\end{center}
  \medskip
} \fi

\bigskip
\begin{abstract}
Computer model calibration involves using partial and imperfect observations of the real world to learn which values of a model's input parameters lead to outputs that are consistent with real-world observations. When calibrating models with high-dimensional output (e.g. a spatial field), it is common to represent the output as a linear combination of a small set of basis vectors. Often, when trying to calibrate to such output, what is important to the credibility of the model is that key emergent physical phenomena are represented, even if not faithfully or in the right place. In these cases, comparison of model output and data in a linear subspace is inappropriate and will usually lead to poor model calibration. To overcome this, we present kernel-based history matching (KHM), generalising the meaning of the technique sufficiently to be able to project model outputs and observations into a higher-dimensional feature space, where patterns can be compared without their location necessarily being fixed. We develop the technical methodology, present an-expert driven kernel selection algorithm, and then apply the techniques to the calibration of boundary layer clouds for the French climate model IPSL-CM. 

\end{abstract}

\noindent%
{\it Keywords:}  
Kernel History Matching, Uncertainty Quantification, Gaussian Processes.
\vfill

\newpage
\spacingset{1.9} 

\section{Introduction}
Computer models are widely used to simulate the behaviour of complex systems. They can be used to simulate a complex system under different conditions, or into the future from present conditions, potentially to support decision making. Computer models have input parameters, some of which might control the type of simulation being done (such as the CO$_2$ atmospheric concentration in a climate simulation), whereas others are ``free" parameters that need to be calibrated before the model can be used for inference, prediction or decision support \citep{ kennedy2001bayesian, rougier2007probabilistic, hourdin2017art}.

Typically, computer simulators are expensive/time consuming to run. For example, a high-resolution coupled Earth system model may take several months to complete a single run spanning several hundreds of years \citep{rougier2009analyzing, williamson2017tuning}. To facilitate calibration of these models, Gaussian Process emulators can be trained on limited training sets of simulations and then used in place of the model in the calibration exercise \citep{asher2015review}. Though we make reference to fully Bayesian calibration \citep{kennedy2001bayesian} throughout, in this article we will focus on calibration via History Matching (HM). HM uses emulators to identify regions of input space that are likely to result in unacceptable mismatches between computer outputs and observations and iteratively removes those regions through a sequence of computer experiments. HM has been applied in many different disciplines, including oil reservoir modelling 
\citep{craig1997pressure,cumming2010bayes},
epidemiology \citep{andrianakis2015bayesian,andrianakis2017efficient} and
galaxy formation \citep{vernon2010galaxy,bower2010parameter}. It has gained particular popularity for climate models, where there is a strong appeal of the idea of retaining all models consistent with a set of important metrics to aid understanding and model development \citep{williamson2015identifying, Processbased1,Processbased2}.

When calibrating to high dimensional output such as spatio-temporal fields, emulators are either constructed ``cell by cell" with a separate GP for each scalar output \citep{gu2016parallel, salmanidou2021NatHaz}, or by projecting the output onto a basis and emulating its coefficients. In this article we focus on the latter, for a comparison of the approaches see \cite{salter2019efficient}. Calibration has been adapted for basis approaches, with the basis defined by the principal components of the training set being the most popular in practice \citep{higdon2008computer, wilkinson2010bayesian, chang2016calibrating}.

For complex spatio-temporal models, the number of training runs will be far fewer than the number of output dimensions, so any training set-derived basis will be rank deficient and unable to reproduce every spatio-temporal pattern that might be of interest. Specifically, basis methods do not necessarily preserve or represent what we will term `emergent features' of the modelled system, whose spatio-temporal location varies due to the parameters, and whose location in the real world may be different to their ideal location in the model. For example, consider the Gulf Stream as an emergent feature of an ocean model. It is known that the higher the resolution of a simulation, the narrower and stronger the current near the surface, and the closer the current to the east coast of North America. However, it is not feasible today to run global ocean model at high enough resolution to have a model Gulf Stream in exactly the same place as it is observed in observations. Yet certain parameters in a given ocean model affect the location of the Gulf Stream, and these require calibration. Approaches to calibrating these features generally focus on finding scalar metrics that characterise them \citep{Processbased1,Processbased2,williamson2012fast}. However, such metrics can be difficult to define and cannot fully capture a spatio-temporal process, so can be unsatisfactory to a model developer attempting to assess whether a particular parameterisation induces a reasonable representation of the emergent feature during a simulation. 

In this paper, we develop HM for emergent features by adapting kernel PCA: using a kernel to first project the output into a `feature space', and then emulating the coefficients of a PCA done in that space. As model output cannot be straightforwardly constructed from the emulator, we generalise the HM procedure to accommodate comparison of model output and data in the feature space. 
 In Section \ref{History matching for high-dimensional output}, we describe existing methods for calibration in many dimensions. We present kernel history matching in Section \ref{Kernel-based history matching}. Section \ref{sec:selection} introduces a method for user-supported kernel selection. In Section \ref{sec::application} we apply our method to the French climate model, IPSL-CM. Section \ref{Conclusion} contains discussion.

\section{History matching high-dimensional model output}
\label{History matching for high-dimensional output}

Let $f(\mathbf{x})$ denote a computer model mapping inputs  $\mathbf{x}\in \mathcal{X}$ into $\mathbb{R}^n$. Letting $z$ denote observations from the real world that the model attempts to mimic, History Matching (HM), like Bayesian calibration \citep{kennedy2001bayesian}, specifies a statistical model relating the computer model to the observations via a `best input' value, $\x^*$. The usual relationship is 
\begin{equation}
z=f(\x^*)+\eta+e,
\label{best}
\end{equation}
where $\eta$ is the model discrepancy, $e$ is the observation error, and all 3 terms are uncorrelated \citep{craig1996bayes, williamson2013history, vernon2010galaxy}.

The \cite{kennedy2001bayesian} calibration framework proceeds by specifying probability distributions for each term in (\ref{best}), with Gaussian processes modelling the function $f(\cdot)$ and discrepancy $\eta$. Priors on $\x^*$ and any hyperparameters of the joint specification are sufficient for Bayesian inference on the best input (and model discrepancy). Strongly informative priors for $\eta$ are required to overcome the identifiability problem raised both in the discussion to \cite{kennedy2001bayesian}, and presented in further detail by \cite{brynjarsdottir2014learning}.

History matching requires a weaker second order model specification with expectations and variances for all terms in (\ref{best}) (other than $\xstar$). The second order specification comes without probability, making expectation the primitive, and inducing a Hilbert space over the collection $\{z, f(\cdot), \eta, e\}$ with the covariance inner product \citep[see, for example][for more on expectation without probability]{williamson2013history, goldstein201326}. For any given value of $\x$, History matching exploits this geometry to calculate the distance between the observations and the model via 
\begin{equation}
  \mathcal{I}(\x)=(\textbf{z}- \E{f(\x)})^T\Var{\textbf{z}- \E{f(\x)}}^{-1}(\textbf{z}- \E{f(\x)}).
  \label{implausibility}
\end{equation}
where, under (\ref{best})
\begin{equation}
 \Var{\textbf{z}- \E{f(\x)}}=  \Var{f(\x)}+\Var{e}+\Var{\eta}.
 \label{varmat}
 \end{equation}
 $\mathcal{I}(\x)$ is termed the \textit{implausibility}, as large values indicate that  $f(\x)$ is too far from the observations for $\x$ to be $\xstar$. A user-specified cutoff threshold, $T$, defines what it means to be ``implausible", with all parameter values such that $\imp{x}>T$ ruled out, and the remaining parameter space termed `Not Ruled Out Yet' (NROY).
To select $T$, appeals have been made to Chebyshev's inequality, Pukelsheim's `three sigma rule' \citep{pukelsheim1994three} and comparison to the quantiles of a Chi-squared distribution with $r$ degrees of freedom (where $r$ is the rank of the variance matrix in (\ref{varmat}) \citep{vernon2010galaxy}). 
NROY space is identified through a sequence of HM exercises known as waves, performed within each previously derived NROY space \citep{Processbased1}. This can lead to all space being ruled out, a situation termed ``the terminal case" by \cite{salter2018uncertainty}, or to a subset of parameters that match the observations. 

\subsection{Emulation}
\label{Emulation}
The expectation and variance of $f()$ are normally obtained by building an emulator for the computer model. If the model is fast enough, it can be used directly with $\E{f(\x)} = f(\x)$, $\Var{f(\x)} = 0$ \citep{gladstone2012calibrated}. Any approach to emulation that gives a fast mean and variance can be used in History matching. For a comprehensive review of modern approaches, see \cite{gramacy2020surrogates}. Here we will outline the standard Gaussian Process (GP) emulator.
 
 Given a prior mean function $m(\x) = h^T(\x)\beta$, a correlation function, $c(\x, \x'; \delta)$, depending on correlation parameters $\delta$, variance parameter $\sigma^2$, and values for parameters $\beta, \delta$, the expectation and variance of $f(\x)$ having observed training set $\textbf{F}=(f(x_1) ,\ \dots  ,\  f(x_n))$ are

\begin{equation}
m^*(\mathbf{x}) = h^{T}(\mathbf{x})\beta + t(\mathbf{x})^T \Lambda^{-1} (\textbf{F}-\textbf{H}\beta),
\label{1eq:fornugget1}
\end{equation}
and $\sigma^2c^*(x,x')$, with $t(\mathbf{x})^T= (c(\mathbf{x},\mathbf{x}_1),\dots ,c(\mathbf{x},\mathbf{x}_n))$,
\begin{equation}
c^*(\mathbf{x},\mathbf{x}')=c(\mathbf{x},\mathbf{x}') - t(\mathbf{x})^T\Lambda^{-1} t(\mathbf{x}'),
\label{1eq:fornugget2}
\end{equation}
 and with $\Lambda$ an $n\times n$ covariance matrix with entries $\sigma^{2}c(\mathbf{x}_i,\mathbf{x}_j)$.

\subsection{Multivariate history matching with basis projection}
For a computer model with high-dimensional output, the most common method for calibration and emulation is to use dimension reduction techniques to represent the high dimensional output as linear combinations of a fixed set of low dimensional basis vectors \citep{higdon2008computer, wilkinson2010bayesian}. 

Suppose the model output, $f(\mathbf{x}_i)$, is a vector of length $l$, so that $\textbf{F}=(f(\mathbf{x}_1) ,\ \dots  ,\  f(\mathbf{x}_n))$ is a matrix that has dimension $l\times n$, and the $n$ inputs are $\mathbf{X}=(\mathbf{x}_1, \ldots, \mathbf{x}_n)$. Principal components can be obtained via singular value decomposition (SVD) of the centred output matrix $ \tilde{\textbf{F}}=\textbf{F}-\textbf{u}$, 
where $\textbf{u}$ is the ensemble mean $\textbf{u}=(u_1  ,\ \dots  ,\ u_l)^T$, and $u_i$ is the mean of $i$-th output, $ u_i=\frac{1}{n} \sum_{j=1}^n f_i(\mathbf{x}_j)$.
Denote the matrix whose columns contain the first $q$ principal components as $\Psi_q$. The number retained, $q$, is normally chosen so that the majority of the variance in the ensemble is explained by projection onto the basis. For instance, \cite{higdon2008computer} suggested that $\Psi_{q}$ should explain more than $99\%$ of the total variance of the data.

The vector of coefficients for the projection of the output onto basis $\Psi_q$ is 
\begin{equation*}
\textbf{c}(\mathbf{x})=(\Psi_{q}^T\Psi_{q})^{-1}\Psi_{q}^T(f(\mathbf{x})-\textbf{u}).
\end{equation*}
Emulators can then be fitted for each of the $q$ coefficients $c_i(\x)$ individually, with the expectation and variance of $\textbf{c}(\x)$ defined as $\E{\textbf{c}(\x)}=(\E{c_1(\x)}, \ldots, \E{c_q(\x)})^T,$
and the associated emulator variance matrix
$\Var{\textbf{c}(\x)}= \mathrm{diag}(\Var{c_1(\x)}, \ldots, \Var{c_q(\x)})$.

Multivariate history matching with basis emulators can either be applied in the original output space via (\ref{implausibility}) and $$\E{f(\mathbf{x})}= \Psi_{q} \E{c(\mathbf{x})}, \qquad \Var{f(\mathbf{x})}= \Psi_{q}\Var{c(\mathbf{x})}\Psi_{q}^T,$$ or on the coefficient space with, 
\begin{equation}
    \mathcal{I}_c(\x)=(\textbf{c}(\textbf{z})- \E{c(\x)})^T \left(  \Var{\textbf{c}(\eta)}+  \Var{\textbf{c}(e)}+\Var{c(\x)}\right)^{-1}(\textbf{c}(\textbf{z})- \E{c(\x)}),\\
    \label{1eq:coefficient muti-Implausibility}
\end{equation}
where $\textbf{c}(\textbf{z}) = (\Psi_{q}^T\Psi_{q})^{-1}\Psi_{q}^T(\mathbf{z}-\textbf{u})$, $ \Var{\textbf{c}(e)}=\Var{\Psi_q^T(e-\textbf{u})} =\Psi_q^T\Var{e}\Psi_q$, and $  \Var{\textbf{c}(\eta)}=\Var{\Psi_q^T(\eta-\textbf{u})} =\Psi_q^T\Var{\eta}\Psi_q$. Computationally, the coefficient implausibility is often the only feasible tool with which to perform history matching, yet used directly it does not lead to the same space reduction as the full field implausibility. \cite{salter2019efficient} showed that the two implausibilities differ by a constant which can be found via a single computation with the same complexity as (\ref{implausibility}).

\cite{salter2018uncertainty} discuss how history matching with the above PCA basis methods can cause a false terminal case when the observations are not in the linear subspace spanned by $\Psi_q$. A false terminal case occurs when there exist values of $\x$ consistent with (\ref{best}) but when history matching rules out all of the parameter space.
\cite{salter2018uncertainty} proposed a rotation that overcomes this problem. The technique amounts to finding an orthogonal basis that minimises the error induced by projecting the observations themselves onto the basis and then reconstructing them from the resulting coefficients, subject to a constraint on the total variability explained by projection onto the basis vectors. 

In many practical applications, what is important is that the model has the right emergent features or ``patterns" given its complexity, and not that those features directly reproduce observations. Such features are particularly common in climate modelling, with examples including position and strength of atmospheric jets, rain bands with respect to topography, strength of oceanic currents with respect to coastline and bathymetry and the El Ni\~no Southern Oscillation (ENSO) spatial pattern and its associated teleconnections. A good simulation must have these properties and it is usually the case that the resolution of the solver would preclude them from being in the same location (gridcells/time points) as observations. For these problems, the distance defined in (\ref{implausibility}) is inappropriate. 

Consider, as a simple example, an observation of a horizontal line of 1's on the bottom 10 cells of a 10x10 grid, with 0's elsewhere (e.g. to represent an ocean current at the discretisation of a toy ocean model). Under any discrepancy with finite variance, the simulation with 0's everywhere, is closer to our observed line of 1's on the bottom 10 cells, than any horizontal line of 1's on the remaining 9 rows. In short, a representation of the current in the wrong place is worse than no current at all under (\ref{implausibility}).

In the next section we adapt kernel methods for history matching. The geometry of history matching, as shown in Section \ref{Emulation}, is fundamentally linear (given that the inner product is covariance). Kernel methods relax that linearity by first mapping the original output space to a higher-dimensional feature space and allowing the linearity only to apply there. We first overview the key ideas from kernel methods before adapting history matching accordingly.

\section{Kernel-based history matching}
\label{Kernel-based history matching}
In this section we describe Kernel Principal Component Analysis (KPCA) for the emulation of computer models \citep{scholkopf2002learning}. Whilst KPCA emulation has appeared in the literature \citep{xing2016manifold}, we describe it here in detail to facilitate our adaptation of history matching.

\subsection{Basis emulation with Kernel PCA}\label{kpca}

Consider model output $f(\x)$ living in $l$-dimensional output space. Kernel methods map $f(x)$ into an $m \geq l$ dimensional ``feature" space via a typically unknown mapping function $\phi(f(\x))$. Rather than specifying $m$ or $\phi$, this mapping is characterised by a user-specified positive definite kernel, $k(\cdot, \cdot)$, that operates on vectors in our output space and returns the dot product of the mapped vectors in feature space \citep{bishop2006pattern,hofmann2008kernel,genton2001classes, scholkopf2001learning}. I.e. 
\begin{equation}
    k(f(\x),f(\x')) = <\phi(f(\x)),\phi(f(\x'))>.
    \label{C4:kerneldefination1}
\end{equation}

Even without access to the explicit mapping, $\phi(\cdot)$, access to the dot product in feature space means we can perform certain analyses within the feature space as if we were able to observe the mapped vectors directly. This works whenever an algorithm using the mapped vectors $\phi(f(\x))$ and $\phi(f(\x'))$ can be expressed only in terms of $\phi(f(\x))^T\phi(f(\x'))$, because the inner products can be replaced by $k(f(\x), f(\x'))$ (the `kernel-trick'), giving access to explicit calculations needed for inference. PCA is one such algorithm, leading to kernel PCA (KPCA) a feature space analogy to the PCA basis methods described in Section \ref{History matching for high-dimensional output}.

Consider the (theoretical) $m \times n$ matrix representing the centred mapped ensemble,
$\tilde{\Phi}=(\tilde{\phi}(f(\x_1)), \ldots, \tilde{\phi}(f(\x_n)))$, such that $\tilde{\phi}(f(\x_n))=\phi(f(\x_i))-\bar{\phi}$ with  mapped ensemble mean $\bar{\phi}=(\bar{\phi}_1  ,\ \dots  ,\ \bar{\phi}_m)^T$, $ \bar{\phi}_i=\frac{1}{n} \sum_{j=1}^n \phi(f_i(\mathbf{x}_j))$.
The kernel PCA basis is $\Psi=(\psi_1  ,\ \dots  ,\ \psi_{n})$, where each $\psi_k$ has length $m$ and the $k$th basis vector $\psi_k$ is 
\begin{equation}
  \psi_k=\sum_{i=1}^{n}\alpha_{ki}\tilde{\phi}(f(\x_i)) = \bm{\alpha}_k\tilde{\Phi}. 
    \label{wk}
\end{equation}
Here $\bm{\alpha}_k$ is the $k$th eigenvector of the centred kernel matrix, $\tilde{\bm{K}} = <\tilde{\phi}(f(\x)),\tilde{\phi}(f(\x'))>$, normalised in feature space \citep[full proof of this result is given in][]{scholkopf1998nonlinear}. The centred kernel matrix can be expressed in terms of the corresponding kernel matrix, $\bm{K}$ (with $ij$th entry $k(f(\x_i), f(\x_j))$) via $$\tilde{K}_{ij} = K_{ij} - \frac1n\sum_{l=1}^nK_{il} - \frac1n\sum_{l=1}^nK_{lj} + \frac{1}{n^2}\sum_{k=1}^n\sum_{l=1}^nK_{kl},$$ and the normalised eigenvectors of $\tilde{\bm{K}}$ are equivalent to its standard eigenvectors, scaled by the square root of the eigenvalues \citep{scholkopf1998nonlinear}.

Though the coefficients in (\ref{wk}) are obtainable, the basis vectors themselves involve the generally unavailable mapped ensemble. However, the projection from feature space into a low dimensional coefficient space is available through the kernel trick via

\begin{equation}
 C_k(\x)= \psi_k^T\tilde{\phi}(f(\x))
 =\sum_{j=1}^{n}\alpha_{kj}\tilde{\phi}(f(\x_j))^T\tilde{\phi}(f(\x))
 =\sum_{j=1}^{n}\alpha_{kj}\tilde{k}(f(\x_j),\ f(\x)).
  \label{projection}
\end{equation}
Given that the above describes a PCA in feature space, the usual PCA properties, such as variance explained by each component in order is maximised subject to orthogonality with the previous components, hold so that we can truncate the KPCA basis to the first $q$ basis vectors, and build emulators for the coefficients $C_1(\x), \ldots, C_q(\x)$ as described in Section \ref{History matching for high-dimensional output} for standard PCA. 

\subsection{History Matching in feature space}\label{sec::KHM}

By defining feature space as the natural space in which to compare model outputs to each other and to observations, the goal of history matching is to find runs with small distances to the observations in feature space. I.e. we want
\begin{equation}
 \begin{aligned}
 \mathcal{I}_{0}(\x)&= ||\phi(z) - \phi(f(\x))||^2 \\
 &= \left(\phi(z)-\phi(f(\x))\right)^T
 \left(\phi(z)-\phi(f(\x))\right)\\
&=k(z,z)+k(f(\x),f(\x))-2k(f(\x),z),\\
 \end{aligned}
 \label{if0}
\end{equation}
the distance between a model output and the observations in feature space, to be small. As with standard history matching, we must adapt (\ref{if0}) to account for only having an emulator for $\phi(f(\x))$ and we do this in the subsequent subsections. Here we discuss the treatment and meaning of the observation and structural uncertainties represented in (\ref{best}).

If we adopt (\ref{best}) on the output space, rather than feature space, we do not obtain a feature representation of (\ref{best}). I.e. $$\phi(z) \neq \phi(f(\xstar)) + \phi(\eta) + \phi(e).$$ This means that, although it is possible to project the uncertainty on $\eta$ and $e$ into feature space and then down onto coefficient space to compute a feature-space analogy to (\ref{1eq:coefficient muti-Implausibility}) \citep[see][Section 4.4 for details]{xu2021generalising}, performing a history match in this way targets the incorrect space (if we hold to (\ref{best})). Instead, we appeal to the fundamental geometric idea behind History Matching: namely, rule out models that are `far' from observations when accounting for uncertainty. Instead of beginning with (\ref{best}), then quantifying key uncertainties and then cutting out space, we begin with a kernel $k(\cdot, \cdot)$ that builds these notions of uncertainty into the measure of how close two points are in the feature space. Given such a kernel we need only adapt (\ref{if0}) to account for emulator uncertainty and find a suitable implausibility cutoff in order to history match in feature space (as we cannot appeal to (\ref{best})). 

One way to build such a kernel is to include observation error and any structural error judgements in the kernel via an $l \times l$ weight matrix, $W$, to reflect judgements regarding what are key regions of output space for matching observations and which are less important. For example, the Gaussian kernel with parameters $\sigma, \delta$ and weight matrix $W$ is
\begin{equation}
k(f(\x),f(\x'))=
\sigma\exp(-(f(\x)-f(\x'))^TW^{-1}(f(\x)-f(\x')))/\delta).
\label{nonlinear kernel}
\end{equation}
For another example, the linear kernel with weight $W$ is
\begin{equation}
k(f(\x),f(\x^{'}))= f(\x)^TW^{-1}f(\x^{'}).
\label{linear kernel}
\end{equation}
We view $W$ as a natural way to reflect judgements regarding which are key regions of output space for matching observations and which are less important. If observation error is available as a known matrix, $\Upsilon$, say, perhaps $W = \Upsilon + \tilde{W}$, where $\tilde{W}$ is set to reflect these discrepancy-type judgements. We present a kernel selection algorithm using expert judgement in Section \ref{sec:selection}. 

\subsection{Implausibility in feature space}
\label{Implausibility in feature space}
We now present 2 approaches to setting implausibility and appropriate cut off thresholds for history matching in feature space using an emulator. 

\subsubsection{Implausibility $\impf{1}{\x}$}

This first approach substitutes the emulator expectation, $\E{\phi(f(\x))}$, for $f(\x)$ in (\ref{if0}), as with the implausibility for standard history matching, and then uses the emulator uncertainty to develop a variable threshold. 

Define implausibility, $\impf{1}{\x}$, as 
\begin{equation}
\impf{1}{\x} = (\phi(z)-\E{\phi(f(\x))})^T
(\phi(z)-\E{\phi(f(\x))}).
\label{if1}
\end{equation}
\begin{theorem}
Consider a KPCA emulator for $f(\x)$ truncated to the first $q$ basis vectors so that $\E{\phi(f(\x))} = \Psi_q\E{\textbf{C}_q(\x)}+\bar{\phi}$, with $\E{\textbf{C}_q(\x)}$ the emulator fitted to the first $q$ KPCA coefficients. In this case, 
\begin{equation}
\impf{1}{\x}
=\tilde{k}(z,z)+\E{\textbf{C}_q(\x)}^T\E{\textbf{C}_q(\x)}-2\E{\textbf{C}_q(\x)}^T\textbf{A}\tilde{\textbf{K}}_{z},
\end{equation}
where $\textbf{A}$ is the matrix containing the first $q$ eigenvectors of the centred kernel matrix, $\tilde{\textbf{K}}_{z}=[\tilde{k}(z,f(\x_1)),\tilde{k}(z,f(\x_2)) ,\ \dots  ,\ \tilde{k}(z,f(\x_n))]$.
\end{theorem}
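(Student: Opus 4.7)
The plan is to open up the squared norm in (\ref{if1}) using the given form $\E{\phi(f(\x))} = \Psi_q\E{\textbf{C}_q(\x)}+\bar\phi$ and then reduce each of the resulting three terms via the orthonormality of the KPCA basis in feature space and the kernel trick. The key simplification I would do up front is to subtract $\bar\phi$ from both terms inside the norm, noting that
\[
\phi(z)-\E{\phi(f(\x))}=\tilde\phi(z)-\Psi_q\E{\textbf{C}_q(\x)},\qquad \tilde\phi(z):=\phi(z)-\bar\phi.
\]
This absorbs $\bar\phi$ into the centred mapping at the start, so that centred kernel evaluations will appear naturally and we avoid having to cancel mixed $\bar\phi$ terms by hand later.

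Expanding the squared norm yields three pieces. The diagonal piece $\tilde\phi(z)^T\tilde\phi(z)$ is by definition the centred kernel $\tilde k(z,z)$, matching the first term in the claim. The quadratic piece $\E{\textbf{C}_q(\x)}^T\Psi_q^T\Psi_q\E{\textbf{C}_q(\x)}$ collapses to $\E{\textbf{C}_q(\x)}^T\E{\textbf{C}_q(\x)}$ provided $\Psi_q^T\Psi_q=I_q$. The cross term $-2\,\tilde\phi(z)^T\Psi_q\E{\textbf{C}_q(\x)}$ is handled entry by entry: substituting $\psi_k=\sum_i\alpha_{ki}\tilde\phi(f(\x_i))$ from (\ref{wk}) and using $\tilde\phi(z)^T\tilde\phi(f(\x_i))=\tilde k(z,f(\x_i))$ gives
\[
\tilde\phi(z)^T\psi_k=\sum_{i=1}^n\alpha_{ki}\,\tilde k(z,f(\x_i)).
\]
Stacking these $q$ scalars produces $\tilde\phi(z)^T\Psi_q\E{\textbf{C}_q(\x)}=\E{\textbf{C}_q(\x)}^T\textbf{A}\tilde{\textbf{K}}_z$ under the paper's convention for $\textbf{A}$, which recovers the final term in the theorem. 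Summing the three contributions with the sign pattern of the expanded square delivers the stated identity.

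The only non-algebraic ingredient is the orthonormality $\Psi_q^T\Psi_q=I_q$, and I would verify it directly: for the eigenpairs $(\bm\alpha_k,\lambda_k)$ of $\tilde{\textbf{K}}$,
\[
\psi_j^T\psi_k=\sum_{i,l}\alpha_{ji}\alpha_{kl}\tilde K_{il}=\bm\alpha_j^T\tilde{\textbf{K}}\bm\alpha_k=\lambda_k\,\bm\alpha_j^T\bm\alpha_k,
\]
which vanishes for $j\neq k$ by orthogonality of eigenvectors of the symmetric matrix $\tilde{\textbf{K}}$ and equals one for $j=k$ under the feature-space normalisation $\|\bm\alpha_k\|^2=1/\lambda_k$ stated after (\ref{wk}). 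Everything else is expand-and-collect bookkeeping; no probabilistic step beyond linearity of expectation is needed, since the assumed form $\E{\phi(f(\x))}=\Psi_q\E{\textbf{C}_q(\x)}+\bar\phi$ has already pushed expectation into the coefficient emulator. Thus I expect the only mildly subtle step to be justifying orthonormality cleanly; once it is established, the rest of the proof is a one-line expansion.
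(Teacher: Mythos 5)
Your proposal is correct and follows essentially the same route as the paper's proof: absorb $\bar{\phi}$ into $\tilde{\phi}(z)$, expand the squared norm, use orthonormality of the $\psi_k$ for the quadratic term, and apply the kernel trick via $\psi_k=\sum_i\alpha_{ki}\tilde{\phi}(f(\x_i))$ for the cross term to obtain $\tilde{k}(z,z)+\E{\textbf{C}_q(\x)}^T\E{\textbf{C}_q(\x)}-2\E{\textbf{C}_q(\x)}^T\textbf{A}\tilde{\textbf{K}}_z$. Your explicit verification that $\psi_j^T\psi_k=\lambda_k\bm{\alpha}_j^T\bm{\alpha}_k$ gives $\Psi_q^T\Psi_q=I_q$ under the feature-space normalisation is a small addition the paper leaves implicit (citing the KPCA literature), but it does not change the argument.
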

The proof is in S1.1.

\begin{res}\label{Tx}
Suppose we have fitted Gaussian process emulators for each $C_k(\x), k=1, \ldots, q.$ If the model and data are sufficiently close in feature space, then $$P\left(\impf{1}{\x} > T(\x)\right) < 0.1,$$  where 
\begin{equation}
T(\x)=\sum_{k=1}^q \Var{C_k(\x)}+ 3 \sqrt{2\left(\sum_{k=1}^q \Var{C_k(\x)}\right)^2}+a,
\label{threshold for if1}
\end{equation}
where constant $a$ is an upper bound on $||\phi(z) - \phi_q(f(\xstar))||^2$ that can be set by expert judgement or derived from the ensemble. 
\end{res}
The proof of this result is in S1.2 and the bound on the probability of exceedance can be made arbitrarily small by increasing the multiplier currently set to $3$. In practice, because the bound on the probability actually applies to an upper bound derived for $\impf{1}{\xstar}$, we have found setting $T(\x)$ with the exceedance bound of $0.1$ to be sufficient for targeting NROY space. We present setting or deriving $a$ in Section \ref{sec:selection}.

Implausibility $\impf{1}{\x}$ together with Result \ref{Tx} offer an algorithm for history matching in feature space with parameter-dependent cut-off threshold $T(\x)$. Namely, NROY space for a particular wave is defined to be $$\{\x : \impf{1}{\x} \leq T(\x)\}.$$

\subsubsection{Implausibility $\impf{2}{\x}$}
\label{sec:if2}
The above version of KHM places observational and structural uncertainty judgements into the kernel and uses euclidean distance in feature space to rule out points with emulator uncertainty appearing in the bound for implausible distances. An approach that may seem closer to standard history matching is to include the emulator uncertainty within the implausibility and to choose a constant bound. 
Let implausibility
\begin{equation}
\impf{2}{\x}= \big(\phi(z)-\E{\phi(f(\x))}\big)^T\big(\textbf{1}_m+\Var{\phi(f(\x))}\big)^{-1}
\big(\phi(z)-\E{\phi(f(\x))}\big),
\label{if2}
\end{equation}
the distance between the observation and model output in feature space scaled by the emulator uncertainty, where $\textbf{1}_m$ is the $m \times m$ identity matrix. Note that
\begin{equation*}
\impf{2}{\x}\longrightarrow \impf{1}{\x}  \qquad \text{as } \qquad
\Var{\phi(f(\x))} \longrightarrow 0.
\end{equation*}

\begin{theorem}
\begin{equation}
\impf{2}{\x} = \big(\textbf{C}_q(z)-\E{\textbf{C}_q(\x)}\big)^T\big(\Var{\textbf{C}_q(\x)}+ \textbf{1}_q\big)^{-1}\big(\textbf{C}_q(z)-\E{\textbf{C}_q(\x)}\big) + ||\varepsilon_z||^2,
\label{if2 coefficient way}
\end{equation}
where $\varepsilon_z$ is the observation reconstruction error and \begin{equation}
\begin{aligned}
 ||\varepsilon_z||^2&=(\tilde{\phi}(z)-\tilde{\phi}_q(z))^T(\tilde{\phi}(z)-\tilde{\phi}_q(z))\\
&=\tilde{k}(z,z)+\textbf{C}_q(z)^T\textbf{C}_q(z)-2\textbf{C}_q(z)^T\textbf{A}\tilde{\textbf{K}}_{z},
   \end{aligned}
   \label{varepsilon calculate}
\end{equation}
where $\textbf{C}(z)$ is the projection of $\phi(z)$ onto the coefficient space (found by substituting $z$ for $f(\x)$ in (\ref{projection})).
\end{theorem}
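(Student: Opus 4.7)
The plan is to reduce the $m$-dimensional quadratic form (\ref{if2}) to a $q$-dimensional Mahalanobis-type expression plus an orthogonal residual, exploiting the low-rank structure of $\Var{\phi(f(\x))}$ on the truncated KPCA subspace together with the orthonormality of the KPCA basis $\Psi_q^T\Psi_q=\textbf{1}_q$ (the KPCA coefficient vectors are normalised in feature space).

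First I would write out the truncated emulator in feature space. From Theorem 1 we already have $\E{\phi(f(\x))} = \Psi_q\E{\textbf{C}_q(\x)}+\bar{\phi}$, and by the same truncation (with the coefficient emulators independent, so that $\Var{\textbf{C}_q(\x)}$ is diagonal) we get $\Var{\phi(f(\x))} = \Psi_q\Var{\textbf{C}_q(\x)}\Psi_q^T$. Introducing the reconstruction error $\varepsilon_z = \tilde{\phi}(z) - \Psi_q \textbf{C}_q(z)$, which is orthogonal to $\mathrm{span}(\Psi_q)$ because $\textbf{C}_q(z)$ is the KPCA projection of $\tilde{\phi}(z)$, gives the decomposition
\begin{equation*}
\phi(z)-\E{\phi(f(\x))} = \Psi_q\bigl(\textbf{C}_q(z)-\E{\textbf{C}_q(\x)}\bigr)+\varepsilon_z, \qquad \Psi_q^T\varepsilon_z = 0.
\end{equation*}

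Next I would invert the weight matrix. Write $V=\Var{\textbf{C}_q(\x)}$. Either by the Woodbury identity or directly by decomposing $\textbf{1}_m = \Psi_q\Psi_q^T + P^\perp$, where $P^\perp$ projects onto the orthogonal complement of $\mathrm{span}(\Psi_q)$, one obtains
\begin{equation*}
\bigl(\textbf{1}_m + \Psi_q V \Psi_q^T\bigr)^{-1} = P^\perp + \Psi_q(\textbf{1}_q+V)^{-1}\Psi_q^T.
\end{equation*}
Substituting the inverse and the residual decomposition into (\ref{if2}), the cross terms between $\Psi_q(\cdot)$ and $\varepsilon_z$ vanish because $\Psi_q^T\varepsilon_z=0$ and $P^\perp\Psi_q=0$. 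Using $P^\perp\varepsilon_z=\varepsilon_z$ and $\Psi_q^T\Psi_q=\textbf{1}_q$, the quadratic form splits cleanly into $(\textbf{C}_q(z)-\E{\textbf{C}_q(\x)})^T(\textbf{1}_q+V)^{-1}(\textbf{C}_q(z)-\E{\textbf{C}_q(\x)})$ plus $\varepsilon_z^T\varepsilon_z = ||\varepsilon_z||^2$, which is precisely the claimed identity.

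The explicit formula (\ref{varepsilon calculate}) for $||\varepsilon_z||^2$ then follows by expanding $(\tilde{\phi}(z)-\Psi_q\textbf{C}_q(z))^T(\tilde{\phi}(z)-\Psi_q\textbf{C}_q(z))$: the first term is $\tilde{k}(z,z)$, the second is $\textbf{C}_q(z)^T\textbf{C}_q(z)$ via $\Psi_q^T\Psi_q=\textbf{1}_q$, and the cross term $-2\textbf{C}_q(z)^T\Psi_q^T\tilde{\phi}(z)$ becomes $-2\textbf{C}_q(z)^T\textbf{A}\tilde{\textbf{K}}_z$ on writing $\Psi_q = \tilde{\Phi}\textbf{A}$ and using the kernel-trick identity $\tilde{\Phi}^T\tilde{\phi}(z) = \tilde{\textbf{K}}_z$. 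The main obstacle is really just identifying the Woodbury-type collapse of the $m$-dimensional inverse onto the $q$-dimensional coefficient subspace; once that and the orthogonality $\Psi_q^T\varepsilon_z=0$ are in hand, every remaining quantity is kernel-computable and the algebra is routine.
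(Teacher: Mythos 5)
Your proposal is correct, and it rests on the same underlying mechanism as the paper's proof — collapsing the $m$-dimensional inverse $\big(\textbf{1}_m+\Psi_q\Var{\textbf{C}_q(\x)}\Psi_q^T\big)^{-1}$ onto the $q$-dimensional coefficient subspace using the low-rank structure and $\Psi_q^T\Psi_q=\textbf{1}_q$ — but you organise the algebra differently. The paper applies the Woodbury identity twice (first to the full inverse, then to $(\Var{\textbf{C}_q(\x)}^{-1}+\textbf{1}_q)^{-1}$), substitutes, and is then left to show that the difference of two quadratic forms, $||\tilde{\phi}(z)-\Psi_q\E{\textbf{C}_q(\x)}||^2-||\textbf{C}_q(z)-\E{\textbf{C}_q(\x)}||^2$, equals $||\varepsilon_z||^2$ by expanding everything in kernel-computable terms as in Theorem 1. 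You instead make the geometry explicit up front: the inverse is written as $P^\perp+\Psi_q(\textbf{1}_q+\Var{\textbf{C}_q(\x)})^{-1}\Psi_q^T$ with $P^\perp=\textbf{1}_m-\Psi_q\Psi_q^T$, and the residual decomposition $\phi(z)-\E{\phi(f(\x))}=\Psi_q\big(\textbf{C}_q(z)-\E{\textbf{C}_q(\x)}\big)+\varepsilon_z$ with $\Psi_q^T\varepsilon_z=0$ makes all cross terms vanish, so the split into the Mahalanobis term plus $||\varepsilon_z||^2$ is immediate and the explicit kernel expression for $||\varepsilon_z||^2$ follows by a single expansion. Your route buys a shorter, more transparent argument that isolates exactly where orthogonality and the truncation enter; the paper's route has the advantage of making the connection to the \cite{salter2019efficient} coefficient-versus-field implausibility calculation explicit, since it literally reproduces that computation in feature space. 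The only points to state carefully (which you do flag, but should verify in a final write-up) are that $\Var{\phi(f(\x))}=\Psi_q\Var{\textbf{C}_q(\x)}\Psi_q^T$ is the assumed truncated-emulator variance in feature space, and that the orthonormality $\Psi_q^T\Psi_q=\textbf{1}_q$ holds because the eigenvectors $\bm{\alpha}_k$ are normalised in feature space.
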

The proof is given in S1.3. As all uncertainties are accounted for either by the kernel or within the implausibility, a constant threshold $T_2$ can be applied as cut-off for ruling out space with $\impf{2}{\x}.$ This can be user specified, and we give a suggestion in the next section. 
 
 \section{Selecting kernels and bounds}\label{sec:selection}

The efficacy of any calibration method depends on user-input. From prior modelling of the input space, the assessment of uncertainties and the choice of outputs/emulators to calibrate with. For KHM, the key user input is the selection of the kernel. The kernel function and its parameters determine the characteristics of the feature space and define what it means for model output and data to be `similar'. 
Non-linear kernels may be the superior choice when history matching emergent features such as those described so far. However, they may not be suitable for every application, particularly when there are no non-linear structures within the data. For each new application, predetermining whether to use a linear kernel, or a non-linear kernel would be difficult. We propose a mixture kernel \citep{smits2002improved} where, for a given setting of the mixing parameters, standard history matching on the model output space is achieved. Our kernel is
\begin{equation}
k(f(\x),f(\x’))=\omega k_1(f(\x),f(\x’))+(1-\omega)k_2(f(\x),f(\x’)),
\label{mixture kernel}
\end{equation}
where $\omega \in [0,1]$ is a weight parameter, $ k_1(f(\x),f(\x’))$ is the linear kernel defined in (\ref{linear kernel}), and $k_2(f(\x),f(\x’)$ is a non-linear kernel. One possible choice for $k_2$ is the Gaussian kernel given in equation (\ref{nonlinear kernel}), and we use this choice in our application. 

\begin{theorem}
\label{standardHM}
Consider Kernel history matching using $\impf{2}{\x}$, the mixture kernel given in (\ref{mixture kernel}), and $W$ in (\ref{linear kernel}) as discussed in Section-\ref{sec::KHM}. Setting $\omega=1$ recovers standard History matching with the PCA basis and implausibility (\ref{implausibility}).
\end{theorem}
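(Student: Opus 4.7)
The plan is to show that, after fixing $\omega=1$, (i) the kernel PCA pipeline collapses to ordinary PCA on a (weighted) centred ensemble, (ii) the expression for $\impf{2}{\x}$ in Theorem 2 reduces to the standard coefficient-space implausibility (\ref{1eq:coefficient muti-Implausibility}) plus an $\x$-independent residual, and (iii) the Salter et al.\ (2019) identity cited after (\ref{1eq:coefficient muti-Implausibility}) equates (\ref{1eq:coefficient muti-Implausibility}) with (\ref{implausibility}) up to a constant, so that absorbing both constants into the cutoff $T_2$ gives the same NROY space as standard HM.

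First, I would set $\omega = 1$ in (\ref{mixture kernel}) and note that the resulting kernel is $k(v,v') = v^T W^{-1} v'$, which is realised by the explicit feature map $\phi(v) = W^{-1/2} v$ (any symmetric square root of $W^{-1}$ works), since $\langle \phi(v),\phi(v')\rangle = v^T W^{-1} v'$. Second, because $\phi$ is linear, centring in feature space is the same as centring in output space: $\tilde{\Phi} = W^{-1/2}\tilde{\textbf{F}}$ with ensemble mean $\bar\phi = W^{-1/2}\textbf{u}$. Thus the centred kernel matrix $\tilde{\bm K}$ of Section \ref{kpca} is exactly the Gram matrix $\tilde{\textbf F}^T W^{-1}\tilde{\textbf F}$, its eigenvectors $\bm\alpha_k$ are those of the PCA Gram of $W^{-1/2}\tilde{\textbf F}$, and the basis vectors $\psi_k$ in (\ref{wk}) are the principal components of $W^{-1/2}\tilde{\textbf F}$. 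Consequently $C_k(\x) = \psi_k^T\tilde\phi(f(\x))$ coincides with the ordinary PCA coefficient of $W^{-1/2}(f(\x)-\textbf{u})$, and for $W=\textbf{1}_l$ we recover the basis and coefficients of Section 2.2 verbatim.

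Third, I would substitute this identification into the coefficient-form expression for $\impf{2}{\x}$ given by Theorem 2. The residual term $\|\varepsilon_z\|^2$ in (\ref{varepsilon calculate}) depends only on $z$ and the training ensemble, so it is an $\x$-independent constant. The leading quadratic form becomes exactly (\ref{1eq:coefficient muti-Implausibility}), provided we identify the identity $\textbf{1}_q$ sitting inside Theorem 2's inverse with the projected observation and structural variances $\Psi_q^T(\Var{e}+\Var{\eta})\Psi_q$ appearing in (\ref{1eq:coefficient muti-Implausibility}). This identification is precisely what is enabled by the recipe in Section \ref{sec::KHM} of pushing $\Var{e}$ and $\Var{\eta}$ into the kernel through $W$: taking $W=\Var{e}+\Var{\eta}$ makes the feature map $\phi(v)=W^{-1/2}v$ whiten exactly these two sources of uncertainty, so that after projection onto the orthonormal $\psi_k$ they contribute the identity $\textbf{1}_q$.

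Fourth, I would invoke the Salter et al.\ (2019) result mentioned immediately after (\ref{1eq:coefficient muti-Implausibility}): the coefficient implausibility and the full-field implausibility (\ref{implausibility}) differ only by an additive constant depending on $z$ and $\Psi_q$. Combined with the constant $\|\varepsilon_z\|^2$ already isolated, this gives $\impf{2}{\x} = \mathcal{I}(\x) + c_0$ for a constant $c_0$. Absorbing $c_0$ into the user-chosen threshold $T_2$ (e.g. setting $T_2 = T + c_0$ for the standard-HM threshold $T$) makes the NROY rules $\{\x:\impf{2}{\x}\le T_2\}$ and $\{\x:\mathcal{I}(\x)\le T\}$ coincide, proving the theorem.

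The main obstacle I anticipate is the bookkeeping in step three: verifying cleanly that the $\textbf{1}_q$ factor produced by the definition of $\impf{2}{\x}$ in (\ref{if2}) (where the feature-space ``noise'' added to the emulator variance is the $m\times m$ identity) really does match the projected $\Var{e}+\Var{\eta}$ term in (\ref{1eq:coefficient muti-Implausibility}) under $W=\Var{e}+\Var{\eta}$, and in particular that the centring by $\bar\phi$ in feature space is compatible with the centring by $\textbf{u}$ used in the standard PCA construction. Everything else is essentially substitution and the invocation of Salter et al.\ (2019).
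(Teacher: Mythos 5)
Your opening two steps coincide with the paper's: with $\omega=1$ the kernel is $k(v,v')=v^TW^{-1}v'$, realised by an explicit linear feature map (the paper writes $\phi(v)=P^Tv$ with $W^{-1}=PP^T$; your $W^{-1/2}$ is the symmetric choice), so centring commutes with the map and KPCA becomes PCA of the whitened centred ensemble. Where you part company with the paper is the finish. The paper never routes through the coefficient-space formula of Theorem 2: it records $\E{\phi(f(\x))}=P^T\E{f(\x)}$ and $\Var{\phi(f(\x))}=P^T\Var{f(\x)}P$, and then, because $P$ is invertible, $P\bigl(\textbf{1}_m+P^T\Var{f(\x)}P\bigr)^{-1}P^T=\bigl(W+\Var{f(\x)}\bigr)^{-1}$, which gives the exact identity $\impf{2}{\x}=(z-\E{f(\x)})^T\bigl(\Var{e}+\Var{\eta}+\Var{f(\x)}\bigr)^{-1}(z-\E{f(\x)})=\mathcal{I}(\x)$ when $W=\Var{e}+\Var{\eta}$ --- no leftover constant and no modification of the cutoff.

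Your detour through Theorem 2 plus the \cite{salter2019efficient} offset leaves a genuine gap: you end with $\impf{2}{\x}=\mathcal{I}(\x)+c_0$ for an unevaluated constant and then redefine the threshold, which proves only ``standard HM up to a recalibrated cutoff'', not the theorem's claim that implausibility (\ref{implausibility}) itself is recovered (same implausibility, same NROY space under the same threshold). In fact $c_0=0$: the constant by which the weighted coefficient implausibility differs from the field implausibility in \cite{salter2019efficient} is precisely the $W^{-1}$-weighted reconstruction error of the observations, which under the linear feature map is exactly the $\|\varepsilon_z\|^2$ term appearing in (\ref{if2 coefficient way}); the two constants cancel, and your argument needs to exhibit that cancellation rather than absorb an unknown $c_0$ into $T_2$. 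Relatedly, your step three does not land on (\ref{1eq:coefficient muti-Implausibility}) as written: that display uses the unweighted PCA basis and orthogonal projections, for which the discrepancy from (\ref{implausibility}) is not an $\x$-independent constant (this is the very point made after (\ref{1eq:coefficient muti-Implausibility})); the constant-offset identity you invoke holds for the $W^{-1}$-weighted projections, which is what the $\omega=1$ KPCA coefficients actually are and what the paper compares against. With those two repairs your route closes, but the paper's direct change-of-variables computation is shorter and sidesteps both issues.
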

A proof is given in S1.4. Note that this result implies that Kernel History Matching generalises history matching and that the simulator output space is only one possible choice of feature space in which we can compare models and data.

\subsection{User Input}\label{sec::acceptable}
Even given the mixture kernel (\ref{mixture kernel}), there is a considerable burden on the user to select $k_2$, set the parameters of the kernels (including weight matrix $W$, weight $\omega$, and any non-linear kernel specific parameters), and to set the relevant components of the threshold of the implausibility (either $a$ for $\impf{1}{\x}$ or $T_2$ for $\impf{2}{\x}$). These choices define what it means for model outputs and data to be `close' in feature space and thus define what features themselves are. As such, user input is critical, yet it is unlikely that any user would be able to provide this input through direct specification of the required parameters. 

Instead, as with more standard types of prior elicitation, we propose to allow the user to specify judgements in terms of observables and to derive appropriate choices of parameters and kernels from there. Specifically, we work with the ensemble $(\X, \F)$, or a subset of it, and ask a user to divide it into a set of runs that are `close' to the data, $(\XA, \FA)$, and a set of `unacceptable' runs $(\XU, \FU)$. Before showing how we can use this information to provide implausibility bounds and to select kernels, we should say more on the feasibility and interpretation of this classification of (a subset of) the ensemble members.

The first important thing to discuss is what we mean by `close'. In the case where a user looks at the ensemble and immediately sees runs with acceptable representations of the process, the task of classifying the ensemble is simple enough. When there are no runs with the desired representation (or perhaps only one or two), we ask the users to put runs into $\XA$ that have at least some representation of the process or, if those don't exist either, at least look closer to it than the runs left in $\XU$. The classification should be repeated in each wave, enabling better simulations to evolve the meaning of ``close" simulations as the exercise proceeds.

\subsection{Kernel estimation}\label{sec::kernel.estimation}

Let $\kpar$ denote the unknown parameters in the mixture kernel: weight  $\omega \in [0,1]$, kernel parameters for $k_2$, $\kappa$, and any parameters, $\phi_{W}$, used to specify $W$. Define `true' NROY space under $\kpar$ and threshold $T$ via 
\begin{equation}
 \mathcal{X}_{\mathrm{NROY}}(\kpar, T)=\{\x \in\mathcal{X} :\impf{0}{\x} \leq T\}.
 \label{optimisation nroy2}
\end{equation}

Our aim is to define $\kpar, T$ so that all $\x^A\in\XA$ and all $\x^U\in\XU$ when possible. We first define a cost function that trades off accuracy and efficiency for a particular classification of the training data. Let
\begin{displaymath}
N_A(\kpar, T) = \sum_{\x\in\XA}\mathbb{I}(\x\in\mathcal{X}_{\mathrm{NROY}}(\kpar, T)), \quad N_U(\kpar, T) = \sum_{\x\in\XU}\mathbb{I}(\x\in\mathcal{X}_{\mathrm{NROY}}(\kpar, T)),
\end{displaymath}
and suppose the number of acceptable runs is $n_A$. Then our cost function, 

\begin{equation}
\mathcal{P}(\kpar, T)= \frac{\alpha}{n_A}  N_A(\kpar, T) +(1-\alpha)\frac{N_A(\kpar, T)}{N_A(\kpar, T)+ N_U(\kpar, T)},
\label{c5:performance evaluation}
\end{equation}
trades off accuracy, here the proportion of NROY acceptable runs, with efficiency, the proportion of NROY runs that are correctly classified. Here $\alpha\in[0,1]$ is set by the user to trade these terms off and we have found $\alpha=0.8$ to be a good practical choice. Note that the optimal value, $\mathcal{P}(\kpar, T)=1$, occurs when $$\max_{x\in\XA}(\impf{0}{\x}) \leq T \leq \min_{x\in\XU}(\impf{0}{\x}),$$ and so, even if an optimal $\kpar$ and $T$ exists, the choice of $T$ is not unique. We define the threshold in (\ref{optimisation nroy2}), $T^{**}$, to be the value of $T$ that keeps the largest NROY space subject to maximising (\ref{c5:performance evaluation}), i.e.
\begin{equation}\label{tstar}
    T^{**} = \max\left\{T^* : T^* = \arg\max_T\mathcal{P}(\kpar, T) \right\}.
\end{equation}
We then define our kernel parameters by maximising $\mathcal{P}(\kpar, T^{**})$.

\subsection{Bounds for $\impf{1}{\x}$ and $\impf{2}{\x}$}\label{sec::user-bounds}

As the value of $a$ in equation (\ref{threshold for if1}) represents an upper bound for $||\phi(z) - \phi_q(f(\xstar))||^2$, we can set this immediately to $T^{**}$ so that, for $\impf{1}{\x}$, the parameter dependent implausibility threshold becomes $$T(\x)=\sum_{k=1}^q \Var{C_k(\x)}+ 3 \sqrt{2\left(\sum_{k=1}^q \Var{C_k(\x)}\right)^2}+T^{**}.$$

An appropriate cut off threshold for $\impf{2}{\x}$ in (\ref{if2}) can be derived from the user-classified training set by noticing that, if we were already to have a KPCA coefficient emulator but had not used the set $\XU$ as part of its training set, then $T_2 = \min_{\x\in\XU}(\impf{2}{\x})$ is an upper bound on a useful threshold. Typically, we will have used all of the ensemble data to train an emulator, so we propose to derive $T_2$ by calculating $\impf{2}{\x}$ using a leave one out emulator for each member of $\XU$ (whereby the emulator is trained using all ensemble members except the left out member) and setting $T_2$ to be the smallest of these implausibilities.

\section{Process-tuning for climate models}
\label{sec::application}

Climate models solve the Navier–Stokes equations on a rotating sphere to simulate the evolution of the Earth's climate \citep{gettelman2016demystifying}. Solving the equations requires the introduction of a grid on which they can be discretized and then numerically integrated. This part of climate models (especially for their atmosphere and ocean compoments) is known as `the dynamical core'. The spatial resolution of the chosen grid determines the types of processes that can be resolved explicitly, and those that require a specific treatment (because they feed back onto the resolved state). The latter processes include turbulent mixing, convection transport and clouds, together with those that rely on physical equations not included in Navier-Stockes (e.g., radiation). These processes are represented through  `physical parametrisations', each of which introduces several `free' parameters that need to be calibrated. Typically climate model calibration targets appropriate behaviour at the process scale as well as good representation of the mean climate and its variability at various scales \citep{hourdin2017art}.

Recently there has been a growing interest in using Uncertainty Quantification to assist with climate model tuning \citep{hourdin2017art}. History matching has emerged as a popular candidate method, as pure optimisation methods and even Bayesian calibration can over fit the parameters controlling massive state vectors when using relatively sparse and selective observations \citep{salter2018uncertainty}. HM has been used to find good high resolution simulators for the UK climate projections \citep{sexton2021perturbed}, for ocean only models \citep{williamson2017tuning}, land surface models \citep{mcneall2016impact,baker2022emulation}, aerosol models \citep{johnson2020robust} and land ice models \citep{mcneall2013potential,gandy2022tuning}. \cite{Processbased1} demonstrated how HM can be used to calibrate physical parametrisation at the process level using single column versions of an atmospheric model compared to reference large-eddy simulations (LES) resolving the targeted processes explicitly. \cite{Processbased2} further showed that this process-based calibration and the iterative refocussing framework can be used to efficiently precondition the full atmospheric model calibration and thereby reduce the computational burden of full model simulations to achieve it. Other applications of HM for process-based calibration include \cite{audouin2021modeling} and \cite{Processbased3}. These process-based history matching examples use user-defined scalar metrics to capture the main features of boundary-layer clouds. PCA-based approaches fail because the targeted key clouds features that the model should capture emerge at different times, and can be slightly shifted vertically due to the coarse vertical resolution. As such shifts in the cloud field do not necessarily preclude of the quality of the simulation, this problem inspired the developments in this paper.

We apply KHM to the boundary-layer clouds simulated by the IPSL climate model, following the process-based framework developed by \cite{Processbased1}. The single-column version of the IPSL climate model is used to calibrate several parameters involved in the model convection parametrisation, based on a case study featuring the transition from stratocumulus to cumulus clouds that is ubiquitous over the subtropical oceans \citep[Reference case in][]{deRoode2016}. The single-column simulations are compared to a Large Eddy Simulation (LES) using the same initial and boundary conditions. The LES and model output represent hourly evolution of the cloud fraction profiles between the surface and 3 km above it, over the three days of the case study. The LES is shown in Figure \ref{fig:6.1}. When running the single-column model on this case study for various choices of its parameters, large differences of this cloud fraction time evolution can occur (see Figure given in S2.1). Nevertheless, important features of the LES that we would like the single-column model to capture include:
\begin{itemize}[itemsep=0cm]
    \item the overall increase of the cloud top over the 3 days, at the right speed
    \item the deepening of the cloud layer, with weak cloud fraction below a cloud layer reaching a cloud fraction around 1
    \item the diurnal cycle of the cloud fraction, with reduced cloud thickness during daytime
    \item the reduction of cloud fraction near the cloud top at the end of the simulation (breaking of stratocumulus clouds by more active convection).
\end{itemize}
In contrast, exact timing, cloud fractions, cloud bottom and top, or depths of the cloud layers are not critical to calibration, both because of structural limits of the model and the chaotic nature of the atmosphere.

\begin{figure}
\centering
\includegraphics[height=0.35\textheight]{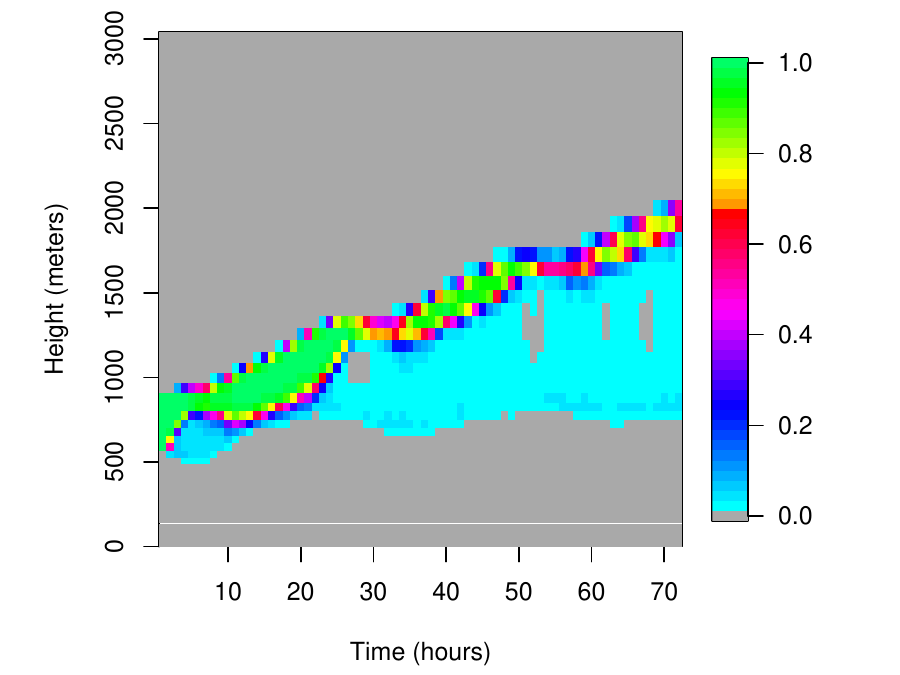}
\caption{Time evolution of the cloud fraction profile over the 3 days of the stratocumulus-to-cumulus transition case study based on the LES used in \cite{Hourdin2019}.}
\label{fig:6.1}
\end{figure}
We ran 90 ensemble members with the single-column model, varying 5 parameters involved in the parametrisation of convection and clouds:  \texttt{thermals\_fact\_epsilon}, \texttt{thermals\_ed\_dz}, \texttt{cld\_lc\_lsc}, \texttt{rad\_chaud1} and \texttt{z0min}. They were varied in their input space (subsequently mapped to $[-1,1]^5$ for emulation and HM) according to a Latin Hypercube design \citep{Processbased1}. The resulting simulations are plotted in S2.1. 

\begin{figure}
\centering
\includegraphics[height=0.35\textheight]{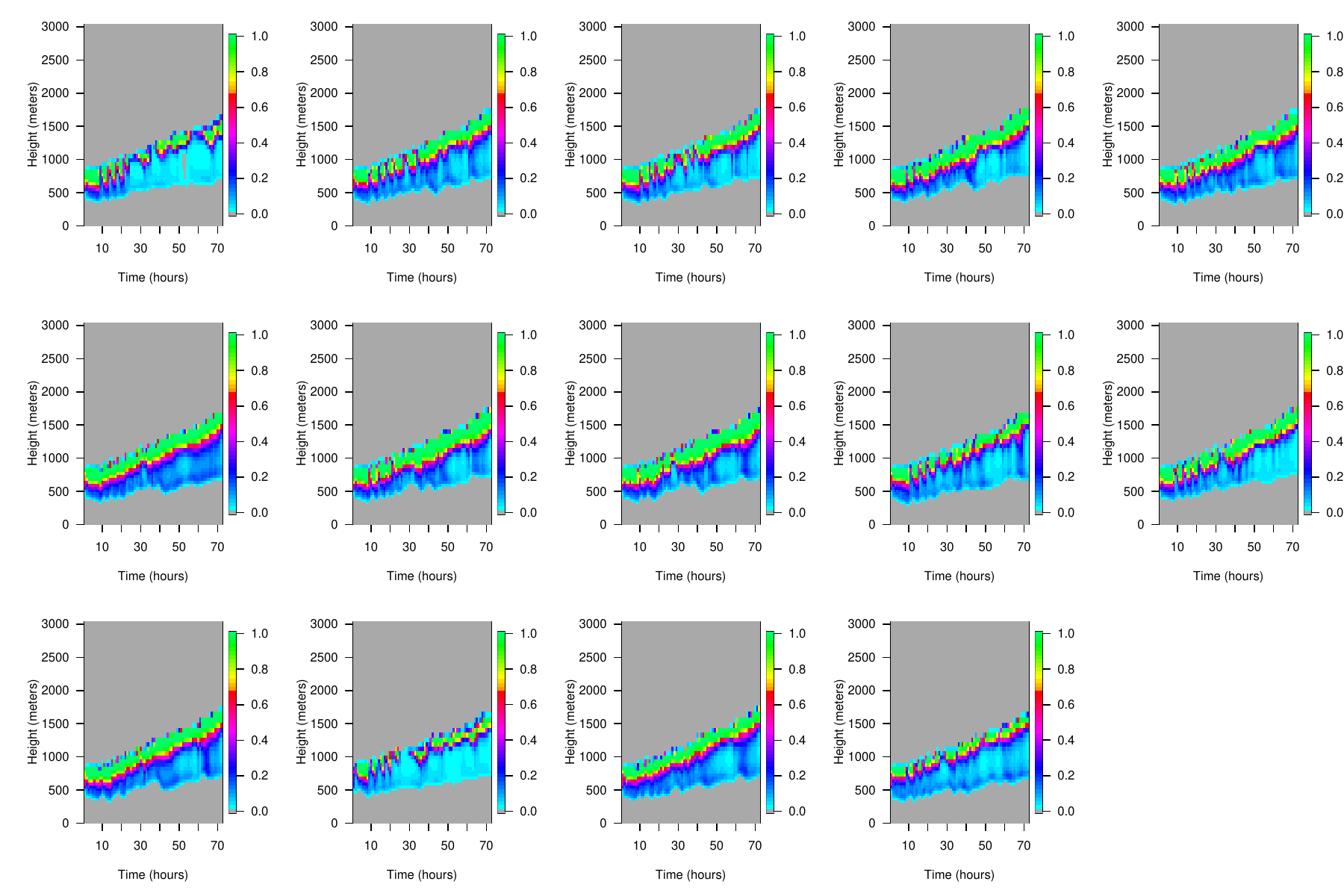}
\caption{The acceptable runs, $\FA$, by expert selection.}
\label{fig:6.3}
\end{figure}
In order to fix our kernels and implausibility thresholds for KHM, we apply the method introduced in Section \ref{sec:selection}, dividing the ensemble into ``acceptable" $(\XA, \FA)$ and ``unacceptable" $(\XU, \FU)$ matches to the reference LES, based on the judgement of co-author F. Hourdin, who coordinates development of the latest atmospheric component of the IPSL climate model. A simple \texttt{R Shiny} application was developed to enable the classification to be made by a series of clicking `accept' or `reject' for the images of the ensemble members. For wave 1, 14 ensemble members were deemed `acceptable' (Figure \ref{fig:6.3}).

\begin{figure}[h!]
		\centering
		\begin{minipage}[t]{0.48\textwidth}
		\includegraphics[height=0.35\textheight,width=1\textwidth]{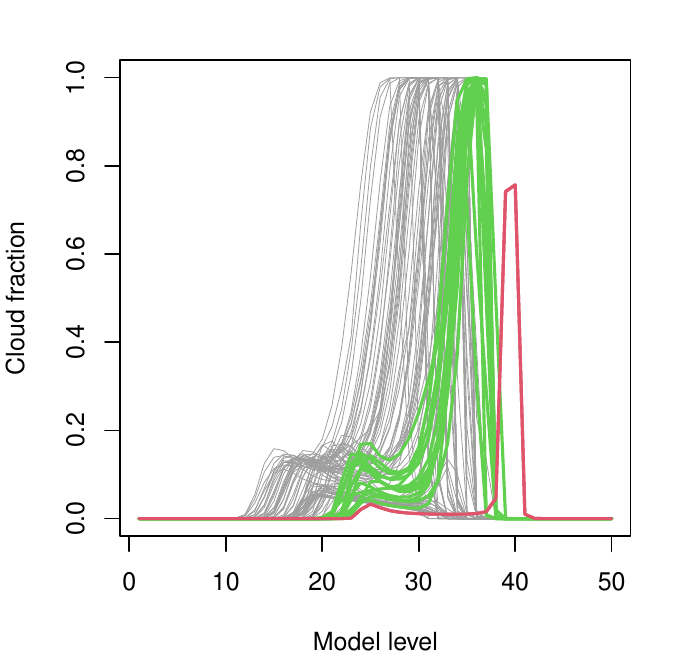}
 	\end{minipage}
 	\begin{minipage}[t]{0.48\textwidth}
		\includegraphics[height=0.35\textheight,width=1\textwidth]{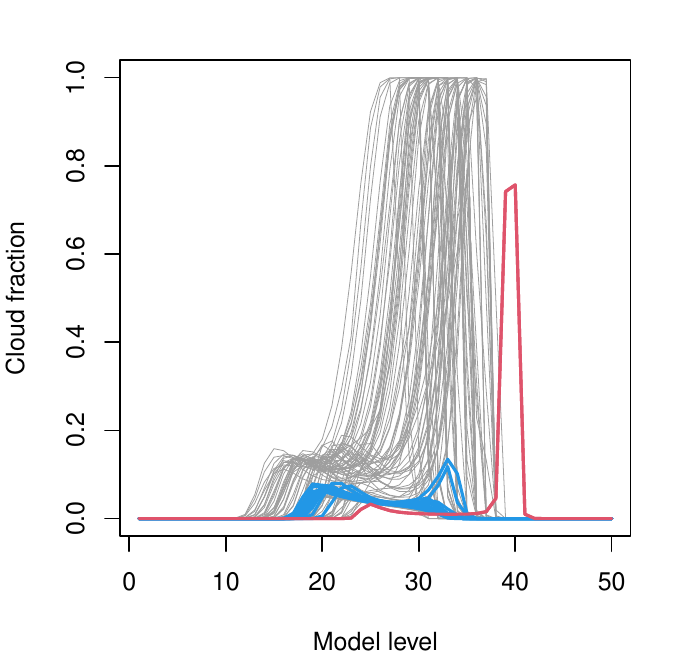}
		\end{minipage}
	\caption{(left) Cloud fraction profile after 68 hours of simulation with the spread of the ensemble of simulations used for wave 1. The wave 1 ensemble is presented in grey, acceptable runs, %selected by the expert, 
	in green and the reference LES in red. (right) Similar to the left panel but with the first 14 wave 1 runs `closest' to the reference LES in blue.}
			\label{fig:6.3.1}
\end{figure}

Though the classification was made solely on the basis of the 2D spatio-temporal images, we can see the types of simulations favoured by looking at individual height profiles. Figure \ref{fig:6.3.1} (left panel) shows the cloud fraction profiles after 68 hours (grey lines) compared to the high resolution simulation (red line), with the acceptable runs highlighted in green. The expert has favoured patterns that lead to higher altitude clouds towards the end of the simulation. Note that the closest simulations (accounting for observation error) to LES in the output space (the blue lines in the right panel of Figure \ref{fig:6.3.1}), are quite different from those runs favoured by the expert, indicating that the standard PCA based calibration methods that compared data in the model output space cannot work, and a kernel strategy is the right one for this problem.

We use the mixture kernel from (\ref{mixture kernel}) with $k_2$ the Gaussian kernel in (\ref{nonlinear kernel}) and weight matrix, $W = \Sigma_e + \Sigma_{\eta}$, where $\Sigma_e$ is the LES uncertainty \cite[see][for its estimate]{Processbased1}, and $\Sigma_{\eta}$ is a positive definite covariance matrix (that would play the role of the model discrepancy if $\omega=1$). Hence $\kpar = \{\omega, \delta, \sigma, \Sigma_{\eta}\}.$ We find these by maximising $\mathcal{P}(\kpar, T^{**})$ in (\ref{c5:performance evaluation}) with $\alpha=0.8$, which gives
$\omega=0.080$, $\delta=0.490$, $\sigma = 428.699$ and the two correlation length parameters in $\Sigma_\eta$ (defined as Gaussian kernel) are suggested as $0.394$ and $0.255$.

We calculate the ensemble projections with KPCA using (\ref{projection}) and retain the first 5 KPCA coefficients following \cite{higdon2008computer}. We fit Gaussian Process emulators to each coefficient, $C_i(\x)$, using the \texttt{tgp} package in \texttt{R}. Leave-one-out cross-validation plots are shown in Figure \ref{appendixchap6.1} where the black error bars represent the emulator 95\% prediction intervals, and the green points are the predicted ensemble members.

\begin{figure}
\centering
\includegraphics[height=0.35\textheight]{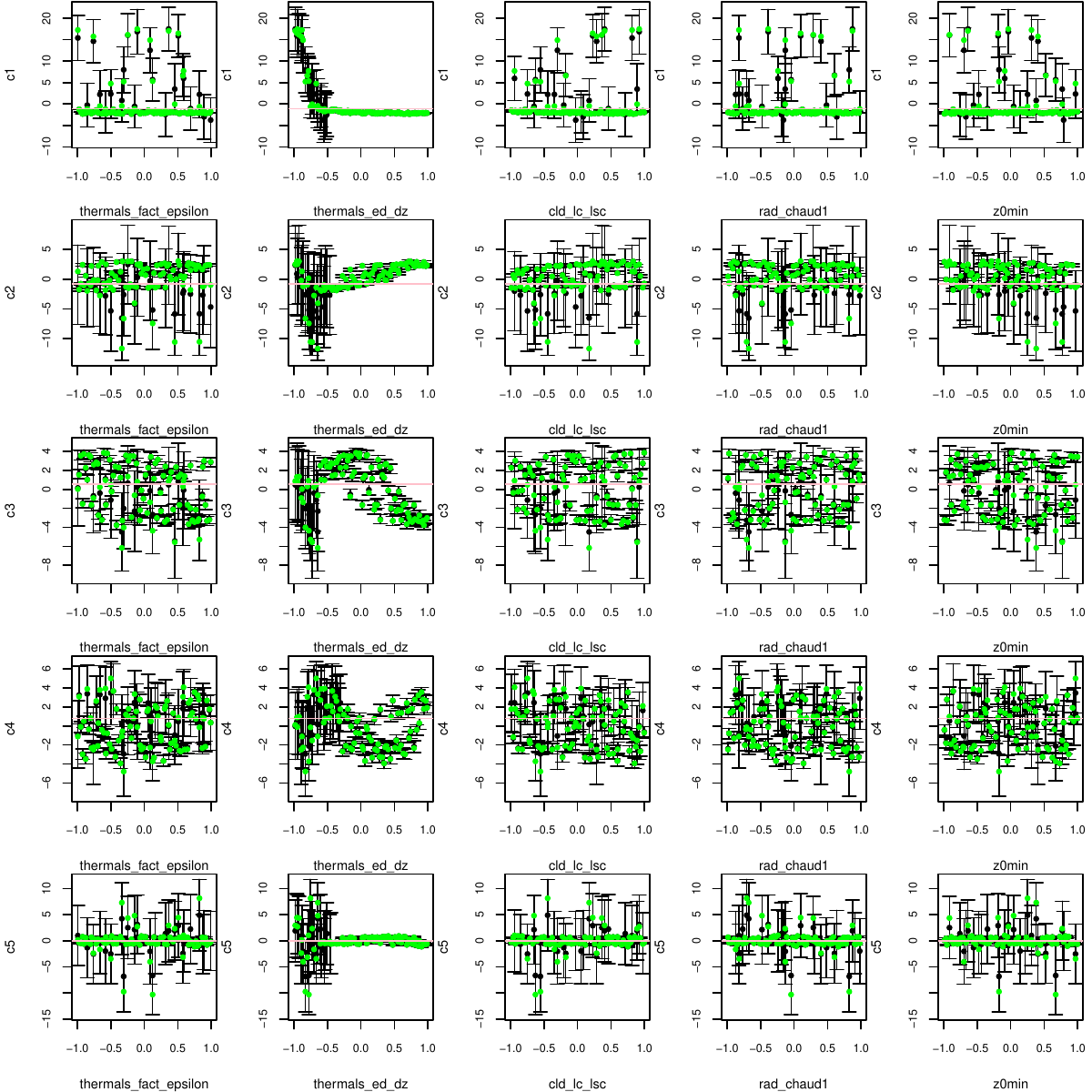}
\caption{Leave-one-out cross-validation plots: wave 1 Gaussian process emulators for $\textbf{C}(\x)$. Black  error bars represent the emulator 95\% prediction intervals, and the green points are the predicted ensemble members}
\label{appendixchap6.1}
\end{figure}

We use KHM with both $\impf{1}{\x}$ and $\impf{2}{\x}$ to rule out implausible regions of parameter space. The wave 1 NROY density plots and the minimum implausibility plots for each pair of parameters is shown in Figure \ref{appendixchap6.2}. We achieve an NROY space $\mathcal{X}^1$ of size $57.78\%$ of $\mathcal{X}$ with $\impf{1}{\x}$, $46.63\%$ of $\mathcal{X}$ with $\impf{2}{\x}$. Figure \ref{appendixchap6.2} shows that NROY space given by $\impf{1}{\x}$ is generally similar to the space given by $\impf{2}{\x}$. KHM with $\impf{1}{\x}$ retains more input space in this example, though this is not always the case \citep[see][for examples]{xu2021generalising}. We select $\impf{1}{\x}$, the more conservative implausibility in this case for subsequent waves.

\begin{figure}
		\centering
		\begin{minipage}[t]{0.48\textwidth}
		\includegraphics[height=0.35\textheight,width=1\textwidth]{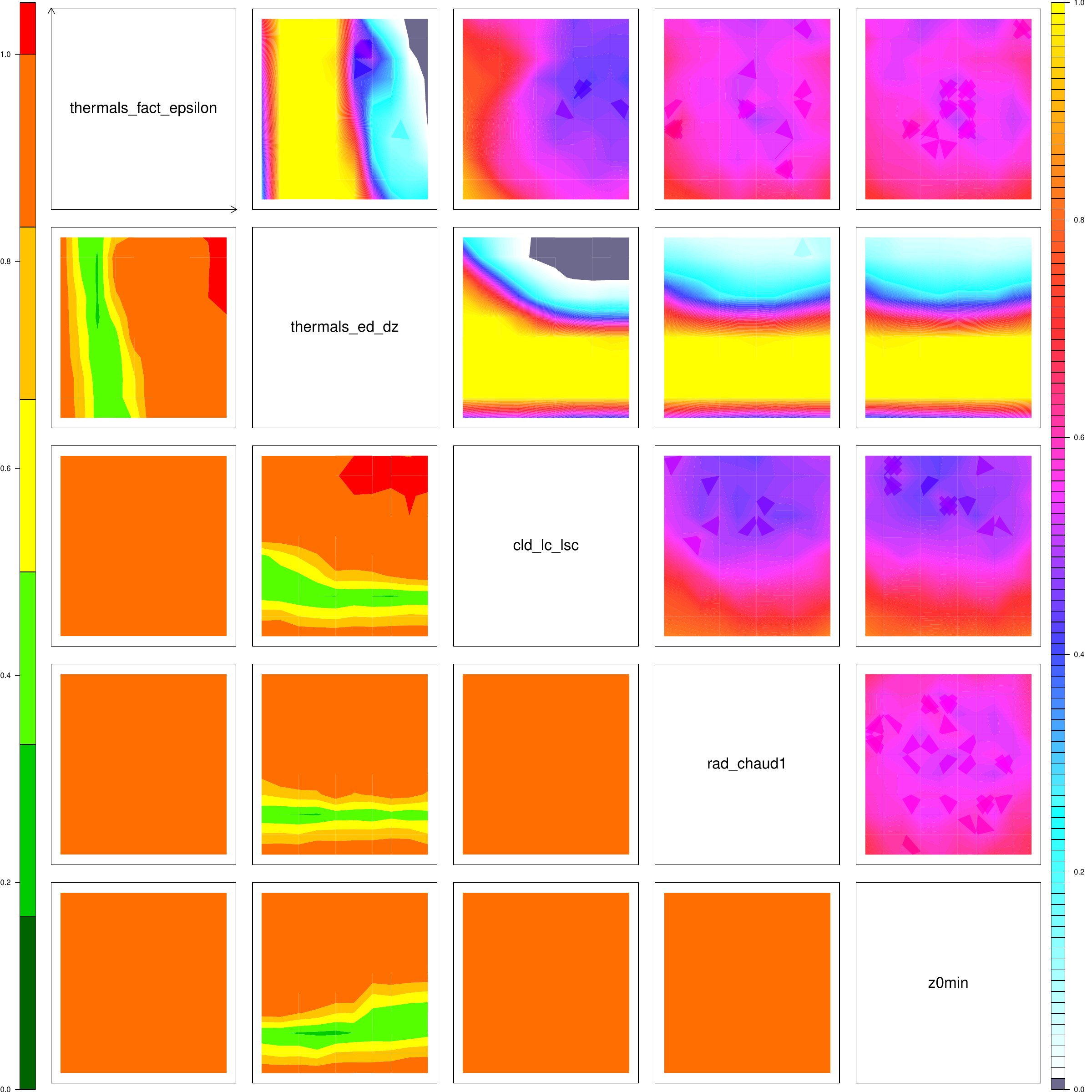}
 	\end{minipage}
 	\begin{minipage}[t]{0.48\textwidth}
		\includegraphics[height=0.35\textheight,width=1\textwidth]{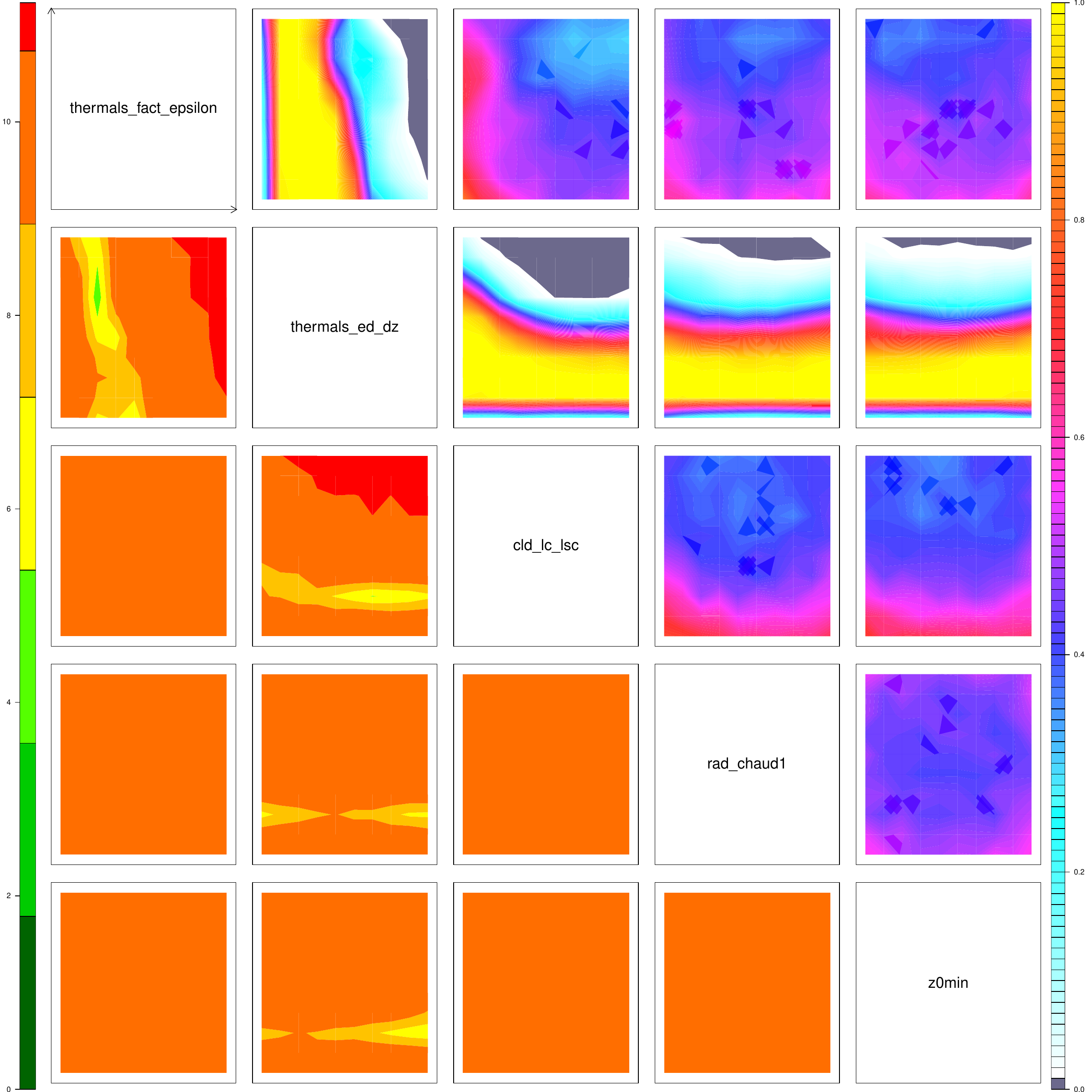}
		\end{minipage}
\caption{Wave 1 with $\impf{1}{\x}$(\textit{left}) and wave 1 with $\impf{2}{\x}$ (\textit{right}) NROY density plots. \textit{Upper triangle:} NROY density plots for each pair of parameters. \textit{Lower triangle:} minimum implausibility plots for each pair of parameters.}
\label{appendixchap6.2}
\end{figure}

\subsection{Refocusing: wave 2 \& wave 3}
We repeat the above KHM exercise for 2 further waves, each time selecting 90 ensemble members from the current NROY space, performing a new expert classification and subsequent kernel estimation, then emulating the first 5 KPCA coefficients and using $\impf{1}{\x}$ to rule out parameter space. As discussed in Section \ref{sec::acceptable}, given each new ensemble, the expert can view the existing simulations and change what they mean by `acceptable' and `unacceptable' representations of key features. In wave 2 only 9 runs were classified as acceptable, and 12 were chosen in wave 3 (these are shown in supplementary S2.2).

The wave 2 and 3 ensemble members are shown in S2.1, and we plot the cloud fraction profiles after 68 hours across all 3 waves, with wave 1 in grey, wave 2 in blue and wave 3 in yellow, in Figure \ref{fig:6.11}. The fitted wave 2 kernel parameters were 
$\omega=0.0970$, $\sigma = 0.285$ and the two correlation length parameters in $\Sigma_\eta$ are suggested as $0.025$ and $0.023$, and wave 2 kernel parameters were $\omega=0.901$, $\sigma = 0.018$ and the two correlation length parameters in $\Sigma_\eta$ are suggested as $0.060$ and $0.168$.
There is a large difference between the wave 1 and wave 2+3 kernels, potentially indicating a shift in expert classification as better simulations become available. Only $28.36\%$ of parameter space is NROY after wave 2 and this is still $26.12\%$ after wave 3, indicating that there is no need for further waves (unless the expert wanted to refine their feature selection to be more discriminatory, or to introduce new metrics). The NROY spaces for wave 2 and 3 are shown in Figure \ref{fig:6.10}.

\begin{figure}
\centering
\includegraphics[height=0.35\textheight]{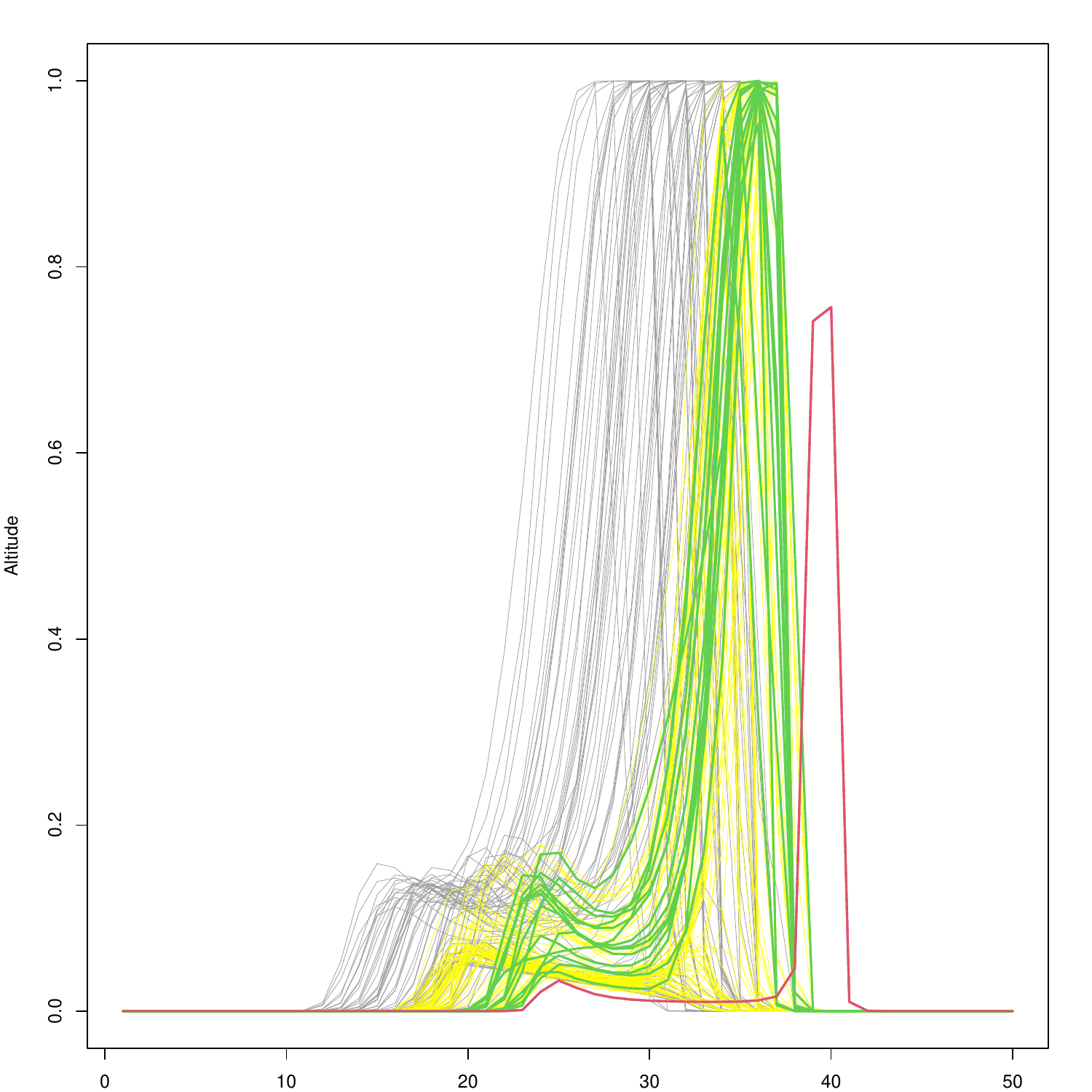}
\caption{The cloud fraction for the later hour (time=68) of the simulation with the spread of the ensemble of simulations used for the different waves indicated in different colours. The wave 1 ensemble is presented in grey, the wave 2 ensemble is presented in yellow, wave 3 ensemble is presented in green, and the reference LES in thick red. }
\label{fig:6.11}
\end{figure}

\begin{figure}
\centering
\includegraphics[height=0.33\textheight, width=0.49\textwidth]{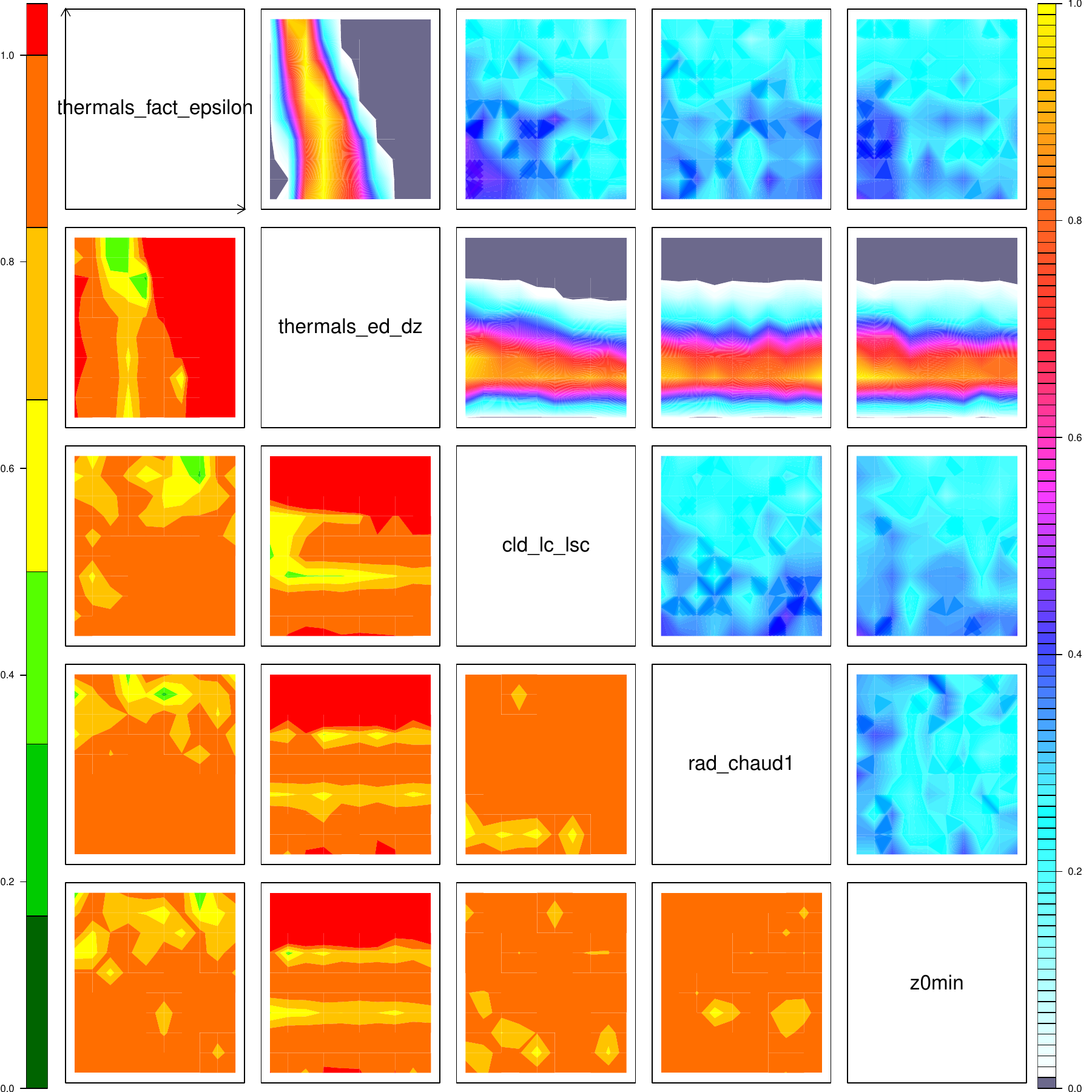}
\includegraphics[height=0.33\textheight, width=0.49\textwidth]{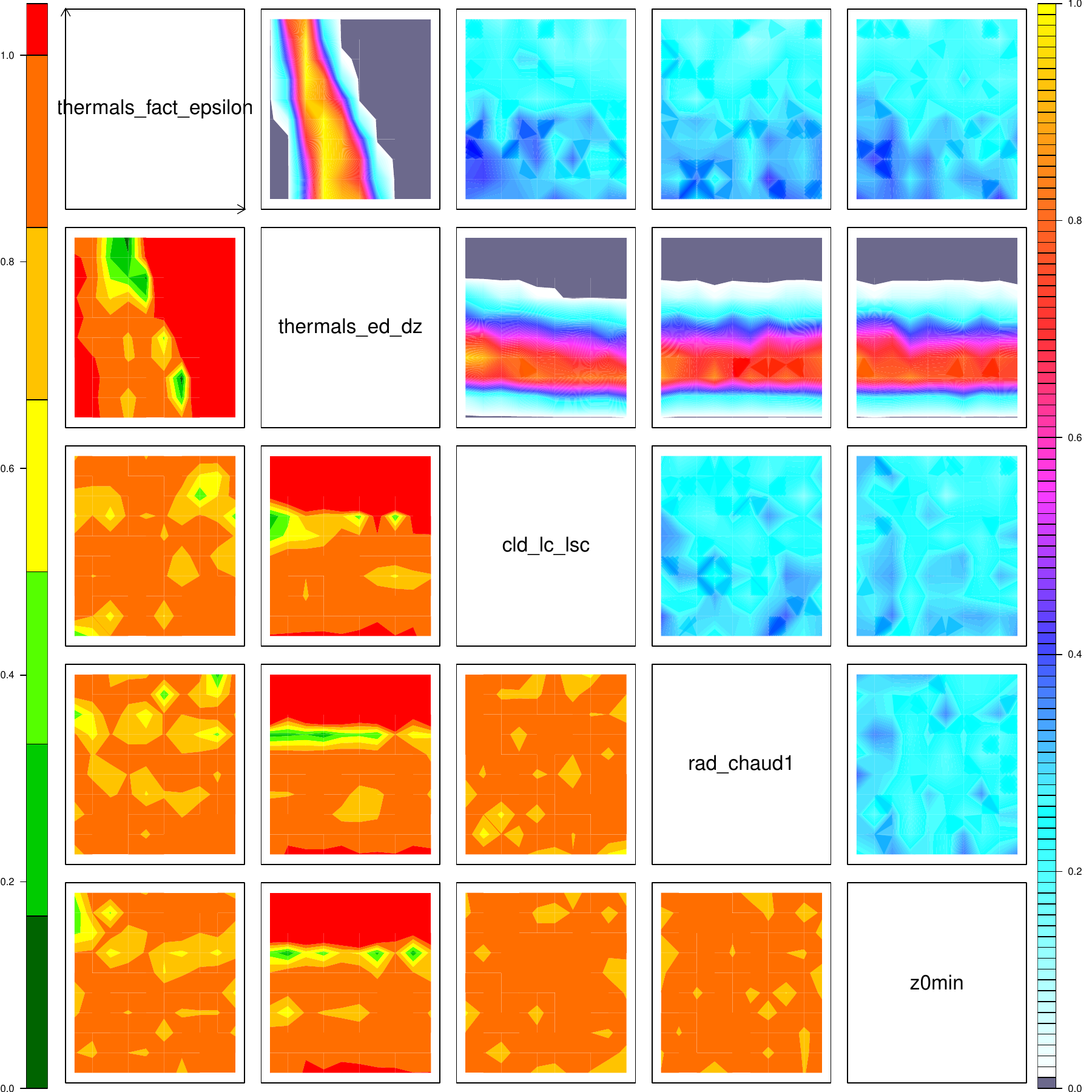}
\caption{Wave 2 (\textit{left}) and wave 3 (\textit{right}) NROY density plots (\textit{upper triangles}) and minimum implausibility plots (\textit{lower triangles}), as described in the caption for Figure \ref{appendixchap6.2}  }
\label{fig:6.10}
\end{figure}

\section{Discussion}
\label{Conclusion}

We have generalised History Matching for high-dimensional outputs with PCA to feature spaces characterised by kernels via KPCA. This generalisation allows emergent spatio-temporal features in models to be appropriately compared with their real-world counter parts, when, for reasons dependent on the model hypotheses, we should not expect the modelled features to be exactly the same as in reality. Our method re-interprets History Matching's geometric idea of comparing model output and data in the Hilbert space with the covariance inner product to one of comparing them in a general Hilbert Space. We provide theory for history matching in these spaces via 2 types of implausibility: one represented by the distance between the emulator mean and data in the reproducing kernel Hilbert space, where emulator uncertainty is accounted for in the implausibility cutoff bound, and the other building emulator uncertainty into a custom implausibility that converges towards the first when the emulator uncertainty reduced to $0$, and that has a constant cut off threshold.

We suggest a particular form of mixture kernel for KHM that recovers standard history matching when features are comparable in a linear subspace, and advocate fitting this kernel and deriving appropriate implausibility bounds using a partial classification of the training data by the modellers. We believe that additional input from the modellers, beyond providing ensembles and data, is critical to any method attempting to calibrate a model based on complex spatio-temporal fields. There are inevitably patterns unique to certain ensemble members, and patterns in the data that would not be considered important by a modeller. Conversely, what characterises a reasonable representation of an important feature (and indeed what an important feature is) is entirely application and model specific. Our view is that a partial classification of existing ensemble members is not a particular challenge for a modeller, and we have demonstrated its efficacy in our example based on boundary-layer cloud modelling.

A particular strength of the KHM approach is the iterative wave structure that allows initial classification and searches for good parameters to allow a wider set of `acceptable' representations than the modeller might finally want. For example, if no ideal matches are found in wave 1, the expert can still choose the best representations thus-far and then refine their classification in later waves if better matches occur. It should be noted that KHM can be used for calibration to multiple metrics, as is common for climate models for example, where spatio-temporal patterns, assessed with the implausibilities described in this paper, are mixed with other types of metric (e.g. scalars) using the traditional implausibility. Models can then be ruled out if they miss multiple implausibility cutoffs as usual \citep{Processbased1}.

When employing standard history matching, the best input assumption used in Bayesian Calibration (see Equation \ref{best}) holds, so that, after a number of waves of history matching, a sample from the posterior can be obtained (either directly using an MCMC within NROY space or by re-weighting NROY samples as in \cite{salter2016comparison}). KHM has no best input assumption and uses a purely geometric argument to determine how `close' models are to data and how close is `close enough'. This might be formalised in future work to allow a probabilistic model within feature space so that a fully Bayesian version of this approach is also available. Such a model may lead to different implausibility bounds for history matching. 

A final issue that deserves further investigation is the lack of ability to reconstruct the model output using the emulator. In most Uncertainty Quantification problems, the emulator enables fast reconstruction of model output (usually ready for comparison to data), and most basis methods allow fast reconstruction of the field of interest from the emulator via simple matrix multiplication. With the KPCA approach, we generally have no access to the non-linear mapping of the model output, $\phi$, so that whilst our emulator is able to capture the variability in feature space through the coefficients, there is no way to visualise the matches we obtain without running the model. This is the `pre-image' reconstruction problem, well known within the literature on kernel methods. Typical solutions from this area involve reconstructing the observations using nearest neighbours within the ensemble, however, these are unlikely to be appropriate in most Uncertainty Quantification problems where ensemble sizes are typically too small to enable these reconstructions to be accurate.

\appendix

\section{Appendix: Proofs}
\label{sec:Proofs}

\subsection{Proof of Theorem 1}
\label{Implausibility1}

\begin{theorem}
Consider a KPCA emulator for $f(\x)$ truncated to the first $q$ basis vectors so that $\E{\phi(f(\x))} = \Psi_q\E{\textbf{C}_q(\x)}+\bar{\phi}$, with $\E{\textbf{C}_q(\x)}$ the emulator fitted to the first $q$ KPCA coefficients. In this case, 
\begin{equation}
\impf{1}{\x}
=\tilde{k}(z,z)+\E{\textbf{C}_q(x)}^T\E{\textbf{C}_q(x)}-2\E{\textbf{C}_q(x)}^T\textbf{A}\tilde{\textbf{K}}_{z},
\end{equation}
where $\textbf{A}$ is the matrix containing the first $r$ eigenvectors of the centred kernel matrix, $\tilde{\textbf{K}}_{z}=[\tilde{k}(z,f(\x_1)),\tilde{k}(z,f(\x_2)) ,\ \dots  ,\ \tilde{k}(z,f(\x_n))]$.
\end{theorem}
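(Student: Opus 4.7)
The plan is to substitute the truncated KPCA emulator form for $\E{\phi(f(\x))}$ into the definition $\impf{1}{\x} = (\phi(z)-\E{\phi(f(\x))})^T(\phi(z)-\E{\phi(f(\x))})$ and then reduce everything that involves the unknown mapping $\phi$ to quantities expressible through kernels, via the KPCA construction given in Section~\ref{kpca}.

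First I would absorb the mean term $\bar{\phi}$ by centering: since $\E{\phi(f(\x))} = \Psi_q\E{\textbf{C}_q(\x)}+\bar{\phi}$, the vector inside the norm becomes $\tilde{\phi}(z) - \Psi_q\E{\textbf{C}_q(\x)}$, where $\tilde{\phi}(z) = \phi(z)-\bar{\phi}$. Expanding the squared norm yields three pieces:
\begin{equation*}
\impf{1}{\x} = \tilde{\phi}(z)^T\tilde{\phi}(z) \;-\; 2\tilde{\phi}(z)^T\Psi_q\E{\textbf{C}_q(\x)} \;+\; \E{\textbf{C}_q(\x)}^T\Psi_q^T\Psi_q\E{\textbf{C}_q(\x)}.
\end{equation*}
The first term is immediately $\tilde{k}(z,z)$ by the kernel-trick applied to the centred feature-space inner product.

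Second, I would simplify the quadratic term using $\Psi_q^T\Psi_q = \textbf{I}_q$. This orthonormality is not assumed but follows from the normalisation of the $\bm{\alpha}_k$ in feature space: from $\psi_k = \sum_i \alpha_{ki}\tilde{\phi}(f(\x_i))$ one computes $\psi_k^T\psi_l = \bm{\alpha}_k^T\tilde{\textbf{K}}\bm{\alpha}_l = \lambda_l\bm{\alpha}_k^T\bm{\alpha}_l$, which equals $1$ when $k=l$ (by the normalisation $\lambda_k\|\bm{\alpha}_k\|^2=1$) and $0$ otherwise (by orthogonality of eigenvectors of the symmetric matrix $\tilde{\textbf{K}}$). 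Hence the quadratic term collapses to $\E{\textbf{C}_q(\x)}^T\E{\textbf{C}_q(\x)}$.

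Third, for the cross term I would invoke equation~(\ref{projection}) applied with $z$ in place of $f(\x)$: the $k$th entry of $\Psi_q^T\tilde{\phi}(z)$ is exactly $C_k(z) = \sum_{j=1}^n \alpha_{kj}\tilde{k}(z,f(\x_j))$, so $\Psi_q^T\tilde{\phi}(z) = \textbf{A}\tilde{\textbf{K}}_z$ with the convention (consistent with the theorem statement's dimensions) that $\textbf{A}$ is the $q\times n$ matrix whose rows are the first $q$ eigenvectors of $\tilde{\textbf{K}}$. Transposing yields the cross term as $-2\E{\textbf{C}_q(\x)}^T\textbf{A}\tilde{\textbf{K}}_z$, and combining the three pieces delivers the claimed identity.

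The only subtlety, rather than a genuine obstacle, is that we never have access to $\phi$, $\Psi_q$, or $\bar{\phi}$ individually: the proof must ensure every appearance of these objects occurs inside an inner product that the kernel-trick can evaluate. The three places this is needed are $\tilde{\phi}(z)^T\tilde{\phi}(z)$, $\Psi_q^T\Psi_q$, and $\Psi_q^T\tilde{\phi}(z)$, and each is handled by one of the three steps above; verifying this is where one must be most careful.
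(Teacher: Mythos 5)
Your proposal is correct and follows essentially the same route as the paper's proof: centre by $\bar{\phi}$, expand the squared norm, identify $\tilde{\phi}(z)^T\tilde{\phi}(z)=\tilde{k}(z,z)$, use feature-space orthonormality of the $\psi_k$ for the quadratic term, and apply the projection formula with $z$ in place of $f(\x)$ to turn the cross term into $\E{\textbf{C}_q(\x)}^T\textbf{A}\tilde{\textbf{K}}_z$. Your explicit verification that $\psi_k^T\psi_l=\lambda_l\bm{\alpha}_k^T\bm{\alpha}_l$ gives $\Psi_q^T\Psi_q=\textbf{I}_q$ is in fact slightly more careful than the paper, which asserts the orthonormality directly from the Sch\"olkopf normalisation.
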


\begin{proof}
Implausibility $\impf{1}{\x}$ can be computed via: 
\begin{equation}
 \begin{aligned}
\impf{1}{\x}
&=\left(\phi(z)-\E{\phi(f(\x))}\right)^T\left(\phi(z)-\E{\phi(f(\x))}\right)\\
&=\left(\phi(z)-(\Psi_q\E{\textbf{C}_q(x)}+\bar{\phi}) \right)^T\left(\phi(z)-(\Psi_q\E{\textbf{C}_q(x)}+\bar{\phi}) \right)\\
&=\left(\phi(z)-\bar{\phi}-\Psi_q\E{\textbf{C}_q(x)} \right)^T\left(\phi(z)-\bar{\phi}-\Psi_q\E{\textbf{C}_q(x)} \right)\\
&=\left(\tilde{\phi}(z)-\Psi_q\E{\textbf{C}_q(x)} \right)^T
\left(\tilde{\phi}(z)-\Psi_q\E{\textbf{C}_q(x)} \right)\\
&=\tilde{\phi}(z)^T\tilde{\phi}(z)+ (\Psi_q\E{\textbf{C}_q(x)})^T(\Psi_q\E{\textbf{C}_q(x)})-2\tilde{\phi}(z)^T(\Psi_q\E{\textbf{C}_q(x)})\\
&=\tilde{k}(z,z)+\left(\sum_{k=1}^{q}\psi_k\E{C_k(\x)}\right)^T
\left(\sum_{k=1}^{q}\psi_k \E{C_k(\x)}\right) -2\tilde{\phi}(z)^T\left(\sum_{k=1}^{q}\psi_k \E{C_k(\x)}\right)\\
&=\tilde{k}(z,z)+\sum_{k=1}^{q} \E{C_k(\x)}^T \psi_k ^T \psi_k  \E{C_k(\x)} -2\tilde{\phi}(z)^T\left(\sum_{k=1}^{q}   \sum_{i=1}^{n}\tilde{\alpha}_{ki}\tilde{\phi}(f(x_i)) \E{C_k(\x)}\right)\\
&=\tilde{k}(z,z)+\sum_{k=1}^{q} \E{C_k(\x)}^T \E{C_k(\x)} -2\left(\sum_{k=1}^{q} \sum_{i=1}^{n}\tilde{\alpha}_{ki} \tilde{\phi}(z)^T\tilde{\phi}(f(x_i)) \E{C_k(\x)}\right)\\
&=\tilde{k}(z,z)+\E{\textbf{C}_q(x)}^T\E{\textbf{C}_q(x)}-2\E{\textbf{C}_q(x)}^T\textbf{A}\tilde{\textbf{K}}_{z},
\label{implausibility-feature02}
    \end{aligned}
\end{equation}
where $\textbf{A}$ is the matrix containing the first $r$ eigenvectors of the centred kernel matrix, $\tilde{\textbf{K}}$, $\Psi_q=(\psi_1,...,\psi_q)$, and 
\begin{equation*}
\tilde{\textbf{K}}_{z}=[\tilde{k}(z,f(\x_1)),\tilde{k}(z,f(\x_2)) ,\ \dots  ,\ \tilde{k}(z,f(\x_n))].
\end{equation*}
\end{proof}

\subsection{Proof of Result 1}
\label{threhsold expert judgement 1}
Before proving Result 1, we first prove the following results:

There is a upper bound, $L(\x^*)$, of the implausibility $\impf{1}{\x^*}$ at the best input $\x^*$.

\begin{proof} $T(\x)$ is a function of $\x$ that accounts for the emulator uncertainty. The implausibility $\impf{1}{\x^*}$ at the best input $\x^*$ can be computed as
\begin{equation}
    \begin{aligned}
\impf{1}{\x^*}
&=\parallel \phi(z)-\E{\phi(f(\x^*))}\parallel^2 \\
&=\parallel \phi(z)-\phi_q(f(\x^*))+\phi_q(f(\x^*))-\E{\phi(f(\x^*))}\parallel^2 \\
&= \parallel \phi(z)-\phi_q(f(\x^*))\parallel^2+\parallel \phi_q(f(\x^*))-\E{\phi(f(\x^*))}
 \parallel^2 \\
&\quad  +2(\phi(z)-\phi_q(f(\x^*)))^T (\phi_q(f(\x^*))\E{\phi(f(\x^*))})\\
&=\parallel \phi(z)-\phi_q(f(\x^*))\parallel^2+\parallel \Psi_q\textbf{C}_q(\x^*)-\Psi_q\E{\textbf{C}_q(\x^*)}\parallel^2\\
&\quad  +2(
\textbf{C}_q(x)^T\textbf{A}\tilde{\textbf{K}}_{z}-\textbf{C}_q(x)^T\textbf{C}_q(x)-\E{\textbf{C}_q(x)}^T\textbf{A}\tilde{\textbf{K}}_{z}+\textbf{C}_q(x)^T\E{\textbf{C}_q(x)})\\
&=\parallel \phi(z)-\phi_q(f(\x^*))\parallel^2+(\textbf{C}_q(\x^*)-\E{\textbf{C}_q(\x^*)})^T\Psi_q^T\Psi_q(\textbf{C}_q(\x^*)-\E{\textbf{C}_q(\x^*)})\\
&\quad  +2(
(\textbf{C}_q(x)-\E{\textbf{C}_q(x)})^T\textbf{A}\tilde{\textbf{K}}_{z}-\textbf{C}_q(x)^T(\textbf{C}_q(x)-\E{\textbf{C}_q(x)}))\\
&= \parallel \phi(z)-\phi_q(f(\x^*))\parallel^2+\parallel \textbf{C}_q(\x^*)-\E{\textbf{C}_q(\x^*)}
 \parallel^2 \\
&\quad  +2(
(\textbf{C}_q(x)-\E{\textbf{C}_q(x)})^T(\textbf{A}\tilde{\textbf{K}}_{z}-\textbf{C}_q(x)))\\
&\leq a +\parallel \textbf{C}_q(\x^*)-\E{\textbf{C}_q(\x^*)}\parallel^2 \\
&=L(\x^*),\\
  \end{aligned}
  \label{kernel-threshold-1}
\end{equation}
where $\phi_q(.)$ is the projection of $\phi$ onto the first $q$ basis function,  the emulator uncertainties are accounted for in $\parallel \textbf{C}_q(\x^*)-\E{\textbf{C}_q(\x^*)}\parallel^2$, and $a$ is an upper bound for $\parallel \phi(z)-\phi_r(f(\x^*))\parallel^2$, which can be set with the expert judgement. To retain all of the acceptable inputs, $\x^{A}$, in the NROY space and rule out all of the unacceptable runs input, $\x^{U}$, we have
\begin{equation}
\begin{aligned}
 a &= \textrm{min} (\parallel \phi(z)-\phi_q(f(\x^{U}))\parallel^2)\\
&=\textrm{min} 
(\tilde{k}(z,z)+\textbf{C}_q(\x^{U})^T\textbf{C}_q(\x^{U})-2\textbf{C}_q(\x^{U})^T\textbf{A}\tilde{\textbf{K}}_{z}).
 \end{aligned}
\end{equation}
\end{proof}

%%%%%%%%%%%% 
%\begin{sres}\label{res2}
The expectation and variance of $\parallel \textbf{C}_q(\x^*)-\E{\textbf{C}_q(\x^*)}\parallel^2$ are 
\begin{equation}
    \E{\parallel \textbf{C}_q(x)-\E{\textbf{C}_q(x)}\parallel^2}=\sum_{k=1}^q \Var{C_k(\x)},
    \label{expectation chi-squared 1}
\end{equation}
\begin{equation}
    \Var{\parallel \textbf{C}_q(x)-\E{\textbf{C}_q(x)}\parallel^2}=2\sum_{k=1}^q \Var{C_k(\x)}^2.
    \label{expectation chi-squared 2}
\end{equation}
%\end{sres}

\begin{proof}
$\textbf{C}_q(x)$ is given by Gaussian process emulators,
\begin{equation*}
    \textbf{C}_q(x)-\E{\textbf{C}_q(x)} \sim N(0,\ \Var{\textbf{C}_q(x)}).
\end{equation*}
For univariate emulators built for each coefficient individually, $\Var{\textbf{C}_q(x)})$ is a diagonal $q \times q$ matrix. For each coefficient, 
\begin{equation*}
\begin{aligned}
&\frac{ (C_k(\x)-\E{C_k(\x)})^2}{\Var{C_k(\x)}} \sim \chi_1^2,\\
&\Rightarrow (C_k(\x)-\E{C_k(\x)})^2 \sim \Var{C_k(\x)}\chi_1^2,\\
\end{aligned}
\end{equation*}
where $\chi_1^2$ is the chi-squared distribution with 1 degree of freedom, and $\E{\chi_1^2}=1$ and $\Var{\chi_1^2}=2$. The expectation and variance of $(C_k(\x)-\E{C_k(\x)})^2$ are therefore
$$\E{(C_k(\x)-\E{C_k(\x)})^2}=\Var{C_k(\x)},$$
and
$$\Var{(C_k(\x)-\E{C_k(\x)})^2}=2\Var{C_k(\x)}^2.$$
So that the expectation and variance of $\parallel \textbf{C}_q(x)-\E{\textbf{C}_q(x)}\parallel^2$ are 
\begin{equation}
    \E{\parallel \textbf{C}_q(x)-\E{\textbf{C}_q(x)}\parallel^2}=\sum_{k=1}^q \Var{C_k(\x)},
    \label{expectation chi-squared 1}
\end{equation}
\begin{equation}
    \Var{\parallel \textbf{C}_q(x)-\E{\textbf{C}_q(x)}\parallel^2}=2\sum_{k=1}^q \Var{C_k(\x)}^2.
    \label{expectation chi-squared 2}
\end{equation}
\end{proof}

\begin{res}\label{res}
Suppose we have fitted Gaussian process emulators for each $C_k(\x), k=1, \ldots, q.$ If the model and data are sufficiently close in feature space, then $$P\left(\impf{1}{\x} > T(\x)\right) < 0.1,$$  where 
\begin{equation}
T(\x)=\sum_{k=1}^q \Var{C_k(\x)}+ 3 \sqrt{2\left(\sum_{k=1}^q \Var{C_k(\x)}\right)^2}+a,
\label{threshold for if1}
\end{equation}
where constant $a$ is an upper bound on $||\phi(z) - \phi_q(f(\xstar))||^2$ that can be set by expert judgement or derived from the ensemble. 
\end{res}

\begin{proof} By letting the threshold for each input $T(\x)$ ($T(\x)>0$) be
\begin{equation}
    T(\x)= \E{L(\x^*)} + t\sqrt{\Var{L(\x^*)}},
\label{Chebyshev's1}
\end{equation}
the Chebyshev–Cantelli inequality (one-sided Chebyshev inequality) gives that
\begin{equation}
\begin{aligned}
    &Pr\left(L(\x^*)-\E{L(\x^*)} > t \sqrt{\Var{L(\x^*)}}\right)\leq \frac{\Var{L(\x^*)}}{\Var{L(\x^*)}+t^2\Var{L(\x^*)}},\\
    &\Rightarrow Pr\left(L(\x^*) > \E{L(\x^*)}+t \sqrt{\Var{L(\x^*)}}\right)\leq \frac{1}{1+t^2},\\
     &\Rightarrow Pr\left(L(\x^*) > T(\x) \right)\leq \frac{1}{1+t^2},\\
\end{aligned}
\label{Chebyshev's2}
\end{equation}
where the value of $t$ ($t>0$) represents that for $L(\x^*)$, at least 
$\frac{t^2}{1+t^2}$ of it's distribution's values are greater than $t$ standard deviations from the mean \citep{hazewinkel2001chebyshev}. By setting $t=3$ for both equations (\ref{Chebyshev's1}) and (\ref{Chebyshev's2}), we have   $T(\x)=\E{L(\x^*)} + 3 \sqrt{\Var{L(\x^*)}}$, and the probability that
$L(\x^*)$ smaller than $T(\x)$ is bigger than $90\%$,
\begin{equation}
Pr \left(L(\x^*) < T(\x) \right) \geq 90\%.
\label{Chebyshev's3}
\end{equation}
For any input $\x$, if $\x=\x^*$, then the probability that it's implausibility, $\impf{1}{\x}$, is bigger than the threshold $T(\x)$ does not exceed $10\%$. Hence, this choice of $t$ implies that we view a value of the implausibility $\impf{1}{\x}$ bigger than threshold $T(\x)$ as 
indicating that it is implausible that $\x=\x^*$. Following proof given before, the threshold, $T(\x)$, given in equation (\ref{Chebyshev's1}) (with t=3) can be computed 
\begin{equation}
T(\x)=\sum_{k=1}^q \Var{C_k(\x)}+ 3 \sqrt{2\left(\sum_{k=1}^q \Var{C_k(\x)}\right)^2}+a,
\label{threshold for if2}
\end{equation}
As the variance of the emulator tends to $0$, the distance between simulator outputs and emulator predictions tends to zero, so that
the threshold function $T(\x)$ will be a constant:
\begin{equation*}
T(\x)= a \qquad \text{as } \qquad
\Var{\phi(f(\x))} \longrightarrow 0.
\end{equation*}
\end{proof}

\subsection{Proof of Theorem 2}
\label{Implausibility2}

Before proving Theorem 2, we first show the calculation of $||\varepsilon_z||^2 $:

\begin{equation}
 \begin{aligned}
 ||\varepsilon_z||^2&=(\tilde{\phi}(z)-\tilde{\phi}_q(z))^T(\tilde{\phi}(z)-\tilde{\phi}_q(z))\\
&=\tilde{k}(z,z)+\textbf{C}_q(z)^T\textbf{C}_q(z)-2\textbf{C}_q(z)^T\textbf{A}\tilde{\textbf{K}}_{z}.
    \end{aligned}
\end{equation}

\begin{proof}
The $k$-th projection $C_k(z)$ is given by:
\begin{equation}
    \begin{aligned}
    C_k(z)
    &=\psi_k^T\tilde{\phi}(z)\\
    &=\sum_{i=1}^{n}\tilde{\alpha}_{ki}\tilde{\phi}(f(\x_i))^T[\phi(z)-\bar{\phi}]\\
 &= 
 \sum_{i=1}^{n}\tilde{\alpha}_{ki}[\phi(f(\x_i))-\bar{\phi}]^T[\phi(z)-\bar{\phi}]\\
 &=\sum_{i=1}^{n}\tilde{\alpha}_{ki}[\phi(f(\x_i))^T\phi(z)-\bar{\phi}^T\phi(z)-\phi(f(\x_i))^T\bar{\phi}+\bar{\phi}^T\bar{\phi}]\\
 &=\tilde{\alpha}_k^T\textbf{K}_{z}-\frac{1}{n}\textbf{1}^T\textbf{K}_{z}(\tilde{\alpha}_k^T\textbf{1})-\frac{1}{n}\tilde{\alpha}_k^T(\textbf{K}\textbf{1})+\frac{1}{n^2}\textbf{1}^T\textbf{K}\textbf{1}(\tilde{\alpha}_k^T\textbf{1}).
    \end{aligned}
\label{obs projection}
\end{equation}
Here $\textbf{K}_{z}=[k(z,f(\x_1)),k(z,f(\x_2)) ,\ \dots  ,\ k(z,f(\x_n)))]$, and $\textbf{1}=[1, \ldots, 1]^T $ is an $n \times 1$ vector. Given the centred kernel matrix, $\tilde{\textbf{K}}_{z}=[\tilde{k}(z,f(x_1)),\tilde{k}(z,f(x_2)) ,\ \dots  ,\ \tilde{k}(z,f(x_n))]$, we can calculate each element $\tilde{k}(z,f(\x))$ by:
\begin{equation*}
    \begin{aligned}
      \tilde{k}(z,f(\x))
      &=  \tilde{\phi}(z)^T \tilde{\phi}(f(\x))\\
      &=[\phi(z)-\bar{\phi}]^T[\phi(f(\x))-\bar{\phi}]\\
      &=k(z,f(\x))-\frac{1}{n}\sum_{i=1}^{n}k(z,f(\x_i))-
      \frac{1}{n}\sum_{i=1}^{n}k(f(\x),f(\x_i))+
      \frac{1}{n^2}\sum_{i=1}^{n} \sum_{j=1}^{n}k(f(\x'),f(\x_i))\\
      &=k(z,f(\x))-\frac{1}{n}\textbf{1}^T\textbf{K}_z-
      \frac{1}{n}\textbf{1}^T\textbf{K}_f(\x)+
      \frac{1}{n^2}\textbf{1}^T\textbf{K}\textbf{1}.
    \end{aligned}
\end{equation*}
Therefore, $\tilde{\textbf{K}}_{z}$ can then be expressed as:
\begin{equation*}
    \begin{aligned}
\tilde{\textbf{K}}_{z}&=[\tilde{k}(z,f(x_1)),\tilde{k}(z,f(x_2)) ,\ \dots  ,\ \tilde{k}(z,f(x_n)))]\\
&=\textbf{K}_{z}-\frac{1}{n}\textbf{1}^T\textbf{K}_{z}\textbf{1}-\frac{1}{n}\textbf{K}\textbf{1}+\frac{1}{n^2}\textbf{1}^T\textbf{K}\textbf{1}\textbf{1}\\
&=\textbf{H}(\textbf{K}_{z}-\frac{1}{n}\textbf{K}\textbf{1}),
    \end{aligned}
\end{equation*}
where $\textbf{H}$ is the centring matrix.

For computation, The $k$-th projection $C_k(z)$ can be written as:
\begin{equation}
       C_k(z)= \psi_k^T\tilde{\phi}(z)= \tilde{\alpha}_k^T\tilde{\textbf{K}}_{z}.
\end{equation}
Hence, every term in Equation (14) can be computed.
\end{proof}

\begin{theorem}
Without necessarily knowing $m$ or having access to mapping function $\phi(\cdot)$, we can calculate $\impf{2}{\x}$ via
\begin{equation}
\impf{2}{\x} = \big(\textbf{C}_q(z)-\E{\textbf{C}_q(\x)}\big)^T\big(\Var{\textbf{C}_q(\x)}+ \textbf{1}_q\big)^{-1}\big(\textbf{C}_q(z)-\E{\textbf{C}_q(\x)}\big) + ||\varepsilon_z||^2,
\label{if2 coefficient way}
\end{equation}
where $\varepsilon_z$ is the observation reconstruction error and \begin{equation}
\begin{aligned}
 ||\varepsilon_z||^2&=(\tilde{\phi}(z)-\tilde{\phi}_q(z))^T(\tilde{\phi}(z)-\tilde{\phi}_q(z))\\
&=\tilde{k}(z,z)+\textbf{C}_q(z)^T\textbf{C}_q(z)-2\textbf{C}_q(z)^T\textbf{A}\tilde{\textbf{K}}_{z},
   \end{aligned}
   \label{varepsilon calculate}
\end{equation}
where $\textbf{C}(z)$ is the projection of $\phi(z)$ onto the coefficient space.
\end{theorem}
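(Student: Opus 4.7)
The plan is to reduce the $m$-dimensional quadratic form defining $\impf{2}{\x}$ to a $q$-dimensional coefficient-space computation plus a scalar residual, by exploiting the orthonormality of the KPCA basis $\Psi_q$ in feature space. The main structural fact is that the KPCA basis vectors $\psi_1,\ldots,\psi_n$ constructed in Section~\ref{kpca} are orthonormal (with the $\bm{\alpha}_k$ normalised so that $\|\psi_k\|=1$), so $\Psi_q^T\Psi_q=\textbf{1}_q$, and we can formally complete $\Psi_q$ to an orthonormal basis $[\Psi_q,\Psi_{\perp}]$ of the feature space, giving the completeness relation $\Psi_q\Psi_q^T+\Psi_{\perp}\Psi_{\perp}^T=\textbf{1}_m$.

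First I would decompose the residual vector that appears in $\impf{2}{\x}$. Writing $\E{\phi(f(\x))}=\Psi_q\E{\textbf{C}_q(\x)}+\bar{\phi}$ as in Theorem~1, and $\phi(z)=\Psi_q\textbf{C}_q(z)+\bar{\phi}+\varepsilon_z$ (which is nothing but the definition of $\varepsilon_z=\tilde{\phi}(z)-\tilde{\phi}_q(z)$), I get the clean split
\begin{equation*}
\phi(z)-\E{\phi(f(\x))}=\Psi_q\bigl(\textbf{C}_q(z)-\E{\textbf{C}_q(\x)}\bigr)+\varepsilon_z,
\end{equation*}
where $\varepsilon_z$ lies in the orthogonal complement of the column space of $\Psi_q$, so $\Psi_q^T\varepsilon_z=0$.

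Next I handle the middle matrix. Because the emulator only supplies variance on the first $q$ KPCA coefficients, $\Var{\phi(f(\x))}=\Psi_q\Var{\textbf{C}_q(\x)}\Psi_q^T$, and combining this with the completeness relation I can write
\begin{equation*}
\textbf{1}_m+\Var{\phi(f(\x))}=\Psi_q\bigl(\textbf{1}_q+\Var{\textbf{C}_q(\x)}\bigr)\Psi_q^T+\Psi_{\perp}\Psi_{\perp}^T,
\end{equation*}
whose inverse is $\Psi_q\bigl(\textbf{1}_q+\Var{\textbf{C}_q(\x)}\bigr)^{-1}\Psi_q^T+\Psi_{\perp}\Psi_{\perp}^T$ by direct verification (the two block terms project onto orthogonal subspaces). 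Substituting the decomposition of the residual into the quadratic form and using $\Psi_q^T\varepsilon_z=0$, all cross terms vanish and the $\Psi_{\perp}\Psi_{\perp}^T$ piece acts as the identity on $\varepsilon_z$, leaving
\begin{equation*}
\impf{2}{\x}=\bigl(\textbf{C}_q(z)-\E{\textbf{C}_q(\x)}\bigr)^T\bigl(\Var{\textbf{C}_q(\x)}+\textbf{1}_q\bigr)^{-1}\bigl(\textbf{C}_q(z)-\E{\textbf{C}_q(\x)}\bigr)+\varepsilon_z^T\varepsilon_z.
\end{equation*}

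Finally, the kernel-trick expression for $\|\varepsilon_z\|^2$ follows by expanding $(\tilde{\phi}(z)-\Psi_q\textbf{C}_q(z))^T(\tilde{\phi}(z)-\Psi_q\textbf{C}_q(z))$, using $\psi_k^T\tilde{\phi}(z)=C_k(z)$, $\psi_k^T\psi_k=1$, and the projection identity $C_k(z)=\tilde{\bm{\alpha}}_k^T\tilde{\textbf{K}}_z$ already established for Theorem~1, exactly mirroring the calculation of $\impf{1}{\x}$ with $\E{\textbf{C}_q(\x)}$ replaced by $\textbf{C}_q(z)$. The main obstacle is really the bookkeeping around the formal complement $\Psi_{\perp}$ in the unknown $m$-dimensional feature space; once one accepts that the orthonormality and completeness relations hold as operator identities on the (possibly infinite-dimensional) RKHS, the rest reduces to standard linear algebra, and the kernel-trick step is routine.
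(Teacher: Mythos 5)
Your proposal is correct, but it reaches the identity by a genuinely different route from the paper. The paper's proof applies the Woodbury formula twice -- first to invert $\textbf{1}_m+\Psi_q\Var{\textbf{C}_q(\x)}\Psi_q^T$, then to rewrite $(\Var{\textbf{C}_q(\x)}^{-1}+\textbf{1}_q)^{-1}$ -- and then recovers $||\varepsilon_z||^2$ by explicitly computing the difference $||\tilde{\phi}(z)-\Psi_q\E{\textbf{C}_q(\x)}||^2-||\textbf{C}_q(z)-\E{\textbf{C}_q(\x)}||^2$ term by term with the kernel trick. You instead split $\phi(z)-\E{\phi(f(\x))}$ into its component in the span of $\Psi_q$ plus the orthogonal residual $\varepsilon_z$, and block-invert $\textbf{1}_m+\Psi_q\Var{\textbf{C}_q(\x)}\Psi_q^T$ along that orthogonal decomposition, so the cross terms die via $\Psi_q^T\varepsilon_z=0$ and the coefficient-space quadratic form plus $\varepsilon_z^T\varepsilon_z$ falls out at once; your final kernel-trick evaluation of $||\varepsilon_z||^2$ (using $C_k(z)=\tilde{\bm{\alpha}}_k^T\tilde{\textbf{K}}_z$ and $\Psi_q^T\Psi_q=\textbf{1}_q$) matches the stated formula, since $\textbf{C}_q(z)=\textbf{A}\tilde{\textbf{K}}_z$. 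Your argument buys three things: it is shorter and more geometric; it never forms $\Var{\textbf{C}_q(\x)}^{-1}$, so it does not implicitly require every emulator variance to be strictly positive, as the paper's first Woodbury step does; and, if phrased with the projector $\Psi_q\Psi_q^T$ and its complement $\textbf{1}_m-\Psi_q\Psi_q^T$ rather than an explicit completion $\Psi_{\perp}$, it transfers verbatim to an infinite-dimensional feature space where $\textbf{1}_m$ is the identity operator -- which also disposes of the one bookkeeping worry you flag. What the paper's route buys is that it works entirely with the computable coefficient-level quantities already set up for Theorem 1, directly generalising the finite-dimensional argument of \cite{salter2019efficient}, and the explicit kernel expression for $||\varepsilon_z||^2$ emerges as a by-product of the same expansion rather than as a separate final step.
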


\begin{proof}

The proof is a generalisation of the work by \cite{salter2019efficient}, which relies on the well-known Woodbury formula \citep{woodbury1950inverting,higham2002accuracy},
$$(A+UCV)^{-1}=A^{-1}-A^{-1}U\left(C^{-1}+VA^{-1}U\right)^{-1}VA^{-1},$$
for matrices A, U, C and V. Expanding
$\impf{2}{\x}$,

\begin{equation*}    
\begin{aligned}
\impf{2}{\x}
&=\big(\phi(z)-\E{\phi(f(\x))}\big)^T\big(\textbf{1}_m+\Var{\phi(f(\x))}\big)^{-1}
\big(\phi(z)-\E{\phi(f(\x))}\big)\\
&=(\tilde{\phi}(z)-\Psi_q\E{\textbf{C}_q(x)})^T(\textbf{1}_m+\Psi_q\Var{\textbf{C}_q(x)}\Psi_q^T)^{-1}
(\tilde{\phi}(z)-\Psi_q\E{\textbf{C}_q(x)})\\
&=(\tilde{\phi}(z)-\Psi_q\E{\textbf{C}_q(x)})^T
\left(\textbf{1}_m^{-1}-\textbf{1}_m^{-1}\Psi_q(\Var{\textbf{C}_q(x)}^{-1}+\Psi_q^T\textbf{1}_m^{-1}\Psi_q)^{-1}\Psi_q^T\textbf{1}_m^{-1}\right)\\
& \qquad \times(\tilde{\phi}(z)-\Psi_q\E{\textbf{C}_q(x)})\\
&=(\tilde{\phi}(z)-\Psi_q\E{\textbf{C}_q(x)})^T
\left(\textbf{1}_m-\Psi_q(\Var{\textbf{C}_q(x)}^{-1}+\Psi_q^T\Psi_q)^{-1}\Psi_q^T\right)
(\tilde{\phi}(z)-\Psi_q\E{\textbf{C}_q(x)})\\
&=(\tilde{\phi}(z)-\Psi_q\E{\textbf{C}_q(x)})^T
\left(\textbf{1}_m-\Psi_q(\Var{\textbf{C}_q(x)}^{-1}+\textbf{1}_q)^{-1}\Psi_q^T\right)
(\tilde{\phi}(z)-\Psi_q\E{\textbf{C}_q(x)})\\
&=(\tilde{\phi}(z)-\Psi_q\E{\textbf{C}_q(x)})^T
(\tilde{\phi}(z)-\Psi_q\E{\textbf{C}_q(x)})\\
& \qquad -(\tilde{\phi}(z)-\Psi_q\E{\textbf{C}_q(x)})^T
\left(\Psi_q(\Var{\textbf{C}_q(x)}^{-1}+\textbf{1}_q)^{-1}\Psi_q^T\right)
(\tilde{\phi}(z)-\Psi_q\E{\textbf{C}_q(x)}).\\
\end{aligned}
\end{equation*}
Applying the Woodbury formula again to the term of $(\Var{\textbf{C}_q(x)}^{-1}+\textbf{1}_m)^{-1}$, we have:
\begin{equation*}    
\begin{aligned}
(\Var{\textbf{C}_q(x)}^{-1}+\textbf{1}_q)^{-1}
&=\textbf{1}_q^{-1} -\textbf{1}_q^{-1}(\Var{\textbf{C}_q(x)}+\textbf{1}_q^{-1})^{-1}\textbf{1}_q^{-1}\\
&=\textbf{1}_q -(\Var{\textbf{C}_q(x)}+\textbf{1}_q)^{-1}.\\
\end{aligned}
\end{equation*}
Therefore, 
\begin{equation*}    
\begin{aligned}
\impf{2}{\x}
&=(\tilde{\phi}(z)-\Psi_q\E{\textbf{C}_q(x)})^T
(\tilde{\phi}(z)-\Psi_q\E{\textbf{C}_q(x)})\\
&\qquad -(\tilde{\phi}(z)-\Psi_q\E{\textbf{C}_q(x)})^T
\left(\Psi_q(\textbf{1}_q -(\Var{\textbf{C}_q(x)}+\textbf{1}_q)^{-1})\Psi_q^T\right)
(\tilde{\phi}(z)-\Psi_q\E{\textbf{C}_q(x)}).\\
&=(\tilde{\phi}(z)-\Psi_q\E{\textbf{C}_q(x)})^T
(\tilde{\phi}(z)-\Psi_q\E{\textbf{C}_q(x)})\\
&\qquad -(\tilde{\phi}(z)-\Psi_q\E{\textbf{C}_q(x)})^T\Psi_q\Psi_q^T
(\tilde{\phi}(z)-\Psi_q\E{\textbf{C}_q(x)})\\
&\qquad +(\tilde{\phi}(z)-\Psi_q\E{\textbf{C}_q(x)})^T
\Psi_q(\Var{\textbf{C}_q(x)}+\textbf{1}_q)^{-1}\Psi_q^T
(\tilde{\phi}(z)-\Psi_q\E{\textbf{C}_q(x)}).\\
&=(\tilde{\phi}(z)-\Psi_q\E{\textbf{C}_q(x)})^T
(\tilde{\phi}(z)-\Psi_q\E{\textbf{C}_q(x)})\\
&\qquad -(\Psi_q^T\tilde{\phi}(z)-\Psi_q^T\Psi_q\E{\textbf{C}_q(x)})^T
(\Psi_q^T\tilde{\phi}(z)-\Psi_q^T\Psi_q\E{\textbf{C}_q(x)})\\
&\qquad +(\Psi_q^T\tilde{\phi}(z)-\Psi_q^T\Psi_q\E{\textbf{C}_q(x)})^T
(\Var{\textbf{C}_q(x)}+\textbf{1}_q)^{-1}
(\Psi_q^T\tilde{\phi}(z)-\Psi_q^T\Psi_q\E{\textbf{C}_q(x)}).\\
&=(\tilde{\phi}(z)-\Psi_q\E{\textbf{C}_q(x)})^T
(\tilde{\phi}(z)-\Psi_q\E{\textbf{C}_q(x)})\\
&\qquad -(\textbf{C}_q(z)-\E{\textbf{C}_q(x)})^T(\textbf{C}_q(z)-\E{\textbf{C}_q(x)}),\\
&\qquad +(\textbf{C}_q(z)-\E{\textbf{C}_q(x)})^T(\Var{\textbf{C}_q(x)}+ \textbf{1}_q)^{-1}(\textbf{C}_q(z)-\E{\textbf{C}_q(x)}).\\
\end{aligned}
\end{equation*}
We first compute $||\textbf{C}_q(z)-\E{\textbf{C}_q(x)}||^2$, 
\begin{equation}
    \begin{aligned}
||\textbf{C}_q(z)-\E{\textbf{C}_q(x)}||^2
&=(\textbf{C}_q(z)-\E{\textbf{C}_q(x)})^T(\textbf{C}_q(z)-\E{\textbf{C}_q(x)})\\
&=(\Psi_q^T\tilde{\phi}(z)-\E{\textbf{C}_q(x)})^T(\Psi_q^T\tilde{\phi}(z)-\E{\textbf{C}_q(x)})\\
&=(\Psi_q^T\tilde{\phi}(z)-\E{\textbf{C}_q(x)})^T\Psi_q^T\Psi_q(\Psi_q^T\tilde{\phi}(z)-\E{\textbf{C}_q(x)})\\
&=(\Psi_q\Psi_q^T\tilde{\phi}(z)-\Psi_q\E{\textbf{C}_q(x)})^T(\Psi_q\Psi_q^T\tilde{\phi}(z)-\Psi_q\E{\textbf{C}_q(x)})\\
&=(\tilde{\phi}_q(z)-\E{\tilde{\phi}_q(f(\x))})^T(\tilde{\phi}_q(z)-\E{\tilde{\phi}_q(f(\x))})\\
&=(\phi_q(z)-\E{\phi_q(f(\x))})^T(\phi_q(z)-\E{\phi_q(f(\x))}).
    \end{aligned}
\end{equation}
The difference between $||\tilde{\phi}(z)-\Psi_q\E{\textbf{C}_q(x)}||^2$  and $||\textbf{C}_q(z)-\E{\textbf{C}_q(x)}||^2$ is:
\begin{equation}
\begin{aligned}
&||\tilde{\phi}(z)-\Psi_q\E{\textbf{C}_q(x)}||^2-||\textbf{C}_q(z)-\E{\textbf{C}_q(x)}||^2\\
&=(\phi(z)-\E{\phi_q(f(\x))})^T
(\phi(z)-\E{\phi_q(f(\x))})-(\phi_q(z)-\E{\phi_q(f(\x))})^T(\phi_q(z)-\E{\phi_q(f(\x))})\\
&=\tilde{k}(z,z)+\E{\textbf{C}_q(x)}^T\E{\textbf{C}_q(x)}-2
\E{\textbf{C}_q(x)}^T\textbf{A}\tilde{\textbf{K}}_{z}\\
&\qquad -\left((\Psi_q^T\tilde{\phi}(z))^T(\Psi_q^T\tilde{\phi}(z)) + \E{\textbf{C}_q(x)}^T\E{\textbf{C}_q(x)}-2\E{\textbf{C}_q(x)}^T(\Psi_q^T\tilde{\phi}(z))\right)\\
&=\tilde{k}(z,z)-2
\E{\textbf{C}_q(x)}^T\textbf{A}\tilde{\textbf{K}}_{z}-\left((\Psi_q^T\tilde{\phi}(z))^T(\Psi_q^T\tilde{\phi}(z)) -2\E{\textbf{C}_q(x)}^T\textbf{A}\tilde{\textbf{K}}\right)\\
&=\tilde{k}(z,z)-(\Psi_q^T\tilde{\phi}(z))^T(\Psi_q^T\tilde{\phi}(z))\\
&=||\varepsilon_z||^2.
\end{aligned}
\end{equation}

Hence, we have proved that
$$\impf{2}{\x}
= \big(\textbf{C}_q(z)-\E{\textbf{C}_q(\x)}\big)^T\big(\Var{\textbf{C}_q(\x)}+ \textbf{1}_q\big)^{-1}\big(\textbf{C}_q(z)-\E{\textbf{C}_q(\x)}\big) + ||\varepsilon_z||^2,$$
we can now efficiently calculate the implausibility in feature space, $\impf{2}{\x}$, without requiring the explicit form of the mapping function.
\end{proof}

\subsection{Proof of Theorem 3}
\label{Achieving standard history matching with kernel approach}

\begin{theorem}
\label{standardHM}
Consider Kernel history matching using $\impf{2}{\x}$, the mixture kernel given in (19) (main file), and $W$ in (12) (main file) as discussed in Section-3.2. Setting $\omega=1$ recovers standard History matching with the PCA basis and implausibility (2) (main file).
\end{theorem}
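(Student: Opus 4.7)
The plan is to unfold the definitions on the left-hand side, pick an explicit feature map for the linear kernel, and show that $\impf{2}{\x}$ algebraically collapses to (\ref{implausibility}) while the coefficient emulator reduces to a PCA basis emulator in the output space. With $\omega = 1$ the mixture kernel (\ref{mixture kernel}) becomes $k(f(\x), f(\x')) = f(\x)^T W^{-1} f(\x')$, which admits the explicit factorisation $k(u,v) = (W^{-1/2} u)^T (W^{-1/2} v)$. I therefore fix $\phi(u) = W^{-1/2} u$ with feature space $\mathbb{R}^l$, under which the centred mapped ensemble is $W^{-1/2}(f(\x_i) - \textbf{u})$ and KPCA reduces to (weighted) PCA of the ensemble. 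The coefficient-space reconstruction $\E{f(\x)} = \textbf{u} + W^{1/2}\tilde{\Psi}_q \E{\textbf{C}_q(\x)}$ therefore has exactly the PCA basis form used in Section 2.2.

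Next, substituting $\phi(\cdot) = W^{-1/2}(\cdot)$ into (\ref{if2}) and writing $v = z - \E{f(\x)}$, $V = \Var{f(\x)}$ gives
\begin{equation*}
\impf{2}{\x} = v^T W^{-1/2}\bigl(\textbf{1}_l + W^{-1/2} V W^{-1/2}\bigr)^{-1} W^{-1/2} v.
\end{equation*}
The key identity $W^{-1/2}(\textbf{1}_l + W^{-1/2} V W^{-1/2})^{-1} W^{-1/2} = (W + V)^{-1}$, obtained by inverting the cleaner relation $W^{1/2}(\textbf{1}_l + W^{-1/2} V W^{-1/2})W^{1/2} = W + V$, then collapses $\impf{2}{\x}$ to $(z - \E{f(\x)})^T(W + \Var{f(\x)})^{-1}(z - \E{f(\x)})$. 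Taking $W = \Var{e} + \Var{\eta}$, as advocated in Section \ref{sec::KHM}, matches this exactly to (\ref{implausibility}) via (\ref{varmat}), and the constant cutoff $T_2$ plays the role of the standard three-sigma-type threshold.

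The main obstacle I anticipate is not algebraic but bookkeeping around what "the PCA basis" means: KPCA with a weighted linear kernel returns eigenvectors of the whitened sample covariance, which coincide with raw PCA of $\F$ only when $W$ is a scalar multiple of the identity; the theorem must therefore be read as recovering \emph{a} PCA basis on an appropriately weighted ensemble, not literally the ordinary PCA eigenvectors of $\F$. One should additionally verify that the truncation residual $\|\varepsilon_z\|^2$ of Theorem 2 is handled consistently. It represents the component of $\phi(z)$ outside the span of $\tilde{\Psi}_q$ and is the kernel analogue of the standard reconstruction error discussed by \cite{salter2019efficient}; because it is independent of $\x$ it shifts the implausibility by a constant and is absorbed into the cutoff, so it does not obstruct the equivalence.
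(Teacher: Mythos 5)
Your proposal is correct and follows essentially the same route as the paper's proof: an explicit feature map from a square-root factorisation of $W^{-1}$ (the paper writes $W^{-1}=PP^T$ via the eigendecomposition and uses $\phi(f(\x))=P^Tf(\x)$ rather than $W^{-1/2}f(\x)$), reduction of KPCA with the $\omega=1$ kernel to a (weighted) PCA of the ensemble, and the matrix identity that collapses $\big(\textbf{1}_m+\Var{\phi(f(\x))}\big)^{-1}$ in feature space to $\big(W+\Var{f(\x)}\big)^{-1}$ in output space with $W=\Sigma_e+\Sigma_\eta$, recovering implausibility (2). Your closing caveats --- that the recovered basis is a $W$-weighted PCA basis tied to the Salter-type weighted projections rather than the raw PCA eigenvectors of $\F$ unless $W$ is proportional to the identity, and that the truncation residual $\|\varepsilon_z\|^2$ is an $\x$-independent constant absorbed into the cutoff --- are points the paper's own proof treats only loosely, so your version is, if anything, slightly more careful on the bookkeeping.
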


\begin{proof}

A feature space which normally lies in a higher dimensional space than that of the original output space is determined by a kernel function. For the linear kernel, $k(f(\x'),f(\x))=f(\x)^T f(\x)$, the feature space is equivalent to the original space, and kernel PCA is exactly equivalent to standard PCA. The explicit formulation of the map function $\phi(\cdot)$ is $\phi(f(x))=f(x)$.

The relationship between kernel PCA and standard PCA suggests that standard history matching can be achieved by KHM. We adopt the form of our mixture kernel defined in equation (19) (main file), and set the weight parameter, $\omega=1$, the kernel function is then 
$$k(f(\x),f(\x’))= f(\x)^T W^{-1}f(\x’),$$
Note that, $W$ is symmetric and can be decomposed as $W= Q^T H Q$, where $Q$ is an orthogonal matrix, and $H$ is a diagonal matrix with the eigenvalues of $W$ as its diagonal elements. Hence, $W^{-1}$ can be written as $W^{-1}= Q^T H^{-1} Q$. Further, by writing $P=Q^TH^{-1/2}$, we have $W^{-1}= PP^T$. The linear kernel is then
$$k(f(\x),f(\x’))= \phi(f(\x))^T\phi(f(\x)^{'})=f(\x)^T PP^T f(\x’).$$
Since we are using the linear kernel for kernel PCA, the feature space is equivalent to the model output space, and the dimension of mapped data, $m$, is the same as the dimension of model output $l$. The explicit formulation of the map function in this case is:
$$\phi(f(\x))=P^T f(\x).$$ 
Therefore, performing kernel PCA on the original dataset $\textbf{F}=(f(\x_1) ,\ \dots  ,\ f(\x_n))$ is equivalent to performing PCA with the $l \times n$ matrix of mapped ensemble members
$$\Phi=(\phi(f(\x_1)) ,\ \dots  ,\ \phi(f(\x_n))=(P^T f(\x_1) ,\ \dots  ,\  P^T f(\x_n)).$$
The mapped ensemble mean, $\bar{\phi}$, given by averaging across the rows of $\Phi$, could be expressed as the original ensemble mean $u$, with 
$$\bar{\phi}=P^T u.$$
Given the centred mapped ensemble $\tilde{\Phi}$, the PCA/SVD basis, $\Psi$, can be calculated via
$$\tilde{\Phi}^T= E B\Psi^T,$$
where $E$ is a $n\times n$ orthonormal
matrix, $\Psi$ is a $m \times m$ orthonormal matrix, and $B$ is a $n \times m$ matrix with non-zero elements only along the main diagonal. 
$\Psi$ is truncated after the first $r$ vectors, giving the truncated basis, $\Psi_q=(\psi_1  ,\ \dots  ,\ \psi_{q})$. The projection of a mapped output $\phi(f(\x))=P^T f(\x)$ onto $\Psi_q$ is given by
\begin{equation}
    \textbf{C}_q(\x)= (\Psi_q^T \Psi_q)^{-1}\Psi_q^T(\phi(f(\x))-\bar{\phi}),
\end{equation}
where $\textbf{C}_q(\x)=[C_1(\x) ,\ \dots  ,\ C_q(\x) ]^T$, and this projection is exactly the same as the projections for PCA-based history matching given in \cite{salter2019efficient}:
\begin{equation}
\begin{aligned}
  \textbf{C}_q(\x)
  &= (\Psi_{PCA_q}^T PP^T \Psi_{PCA_q})^{-1}\Psi_{PCA_q}^T P(\phi(f(\x))-\bar{\phi})\\
  &= (\Psi_{PCA_q}^T PP^T \Psi_{PCA_q})^{-1}\Psi_{PCA_q}^T PP^T(f(\x)-u)\\
  &=(\Psi_{PCA_q}^T W^{-1}\Psi_{PCA_q})^{-1}\Psi_{PCA_q}^TW^{-1}(f(\x)-u),\\
\end{aligned}
\label{projections pcakpca}
\end{equation}
where $\Psi_{PCA}$ is the PCA basis,  $\Psi_{PCA_q}$ is the truncated basis. $\Psi_{PCA}$ is defined using the original model outputs from the ensemble,
$$F^T= E' B' \Psi_{PCA}^T.$$
Here, $E'$ is a $n\times n$ orthonormal
matrix, $\Psi_{PCA}$ is an $l \times l$ orthonormal matrix, and $B'$ is an $n \times l$ matrix with non-zero elements only along the main diagonal. By the properties of the singular value decomposition, and as $\phi(f(\x))=P^T f(\x)$, we have that $\Psi= P^T \Psi_{PCA}^T$.

%A set of coefficients $\textbf{C}_q(\x)$ can be used to find the reconstruction of $\phi(f(\x))$, with
%\begin{equation}
%    \phi_q(f(\x)) =\Psi_q\Psi_q^T P^T (f(\x)-u)+ P^T u
%\end{equation}

To test whether KHM is the same as standard history matching based on the proposed kernel function, we compare the implausibility functions. When there is no emulator, the implausibility for standard history matching, $\mathcal{I}(\x)$, is defined as the Mahalanobis distance between the observations and the computer model:
$$\mathcal{I}(\x)=(z-f(\x))^T (\Sigma_e+\Sigma_\eta)^{-1} (z-f(\x)).$$
For the same input $\x$, the implausibility for KHM,  $ \mathcal{I}_{0}(\x)$, is defined as the euclidean distance between the observations and the computer model:
$$ \mathcal{I}_{0}(\x)
=\left(\phi(z)-\phi(f(\x))\right)^T\left(\phi(z)-\phi(f(\x))\right),$$
which can be written as:
\begin{equation}
    \begin{aligned}
   \mathcal{I}_{0}(\x)
&=(P^T z-P^T f(\x) )^T(P^T z-P^T f(\x) )\\
&=(z-f(\x))^T P P^T (z-f(\x))\\
&=(z-f(\x))^T W^{-1} (z-f(\x))\\
&=(z-f(\x))^T (\Sigma_e+\Sigma_\eta)^{-1} (z-f(\x)).\\
    \end{aligned}
    \label{st=kerhm}
\end{equation}
Hence, the implausibility for kernel PCA-based history matching and PCA history matching is the same.

When the emulators are required and built for the coefficients on the first $q$ basis vectors, $\textbf{C}_q(\x)=[C_1(\x) ,\ \dots  ,\ C_q(\x) ]^T$,
\begin{equation}
     C_k(\x) \sim \mathcal{GP}(m_k(\x), \ \sigma_k^2 c_k(\x,\x)),  \quad  k=1 ,\ \dots  ,\ r,
\end{equation}
with the emulator expectation for each of the $r$ basis vectors given by 
$$ \E{\textbf{C}_q(\x)}=[\E{C_1(\x)} ,\ \dots  ,\ \E{C_q(\x)} ]^T,$$
and the associated emulator variance matrix: 
$$\Var{\textbf{C}_q(\x)}= \text{diag}[\Var{C_1(\x)} ,\ \dots  ,\ \Var{C_q(\x)}].$$
Note that the coefficients are same for PCA-based history matching and kernel PCA-based history matching. Thus, we 
retrieve the expectation and variance of $f(\x)$ via:
\begin{equation}
\E{f(\x)} = \Psi_{PCA_q} \E{\textbf{C}_q(\x)}+u, \quad 
\Var{f(\x)} = \Psi_{PCA_q}\Var{\textbf{C}_q(\x)}\Psi_{PCA_q}^T,
\end{equation}
and the expectation and variance of $\phi(f(\x))$ via:
\begin{equation}
    \E{\phi(f(\x))} = \Psi_q\E{\textbf{C}_q(\x)}+\bar{\phi},
\quad 
    \Var{\phi(f(\x))} = \Psi_q\Var{\textbf{C}_q(\x)}\Psi_q^T.
\end{equation}
Since $\Psi= P^T \Psi_{PCA}^T$ and $\bar{\phi}=P^T u$, we then can write the relationship between $\E{f(\x)}$ and $\E{\phi(f(\x))}$:
$\E{\phi(f(\x))}=P^T \E{f(\x)}$,
and the the relationship between $\Var{f(\x)} $ and $\Var{\phi(f(\x))}$:
$ \Var{\phi(f(\x))}= P^T \Var{f(\x)} P$.
Based on these relationships, we prove that the implausibility for standard history matching is the same the implausibility for KHM, we have:
\begin{equation}
\mathcal{I}(\x)= \mathcal{I}_{0}(\x),
\label{theorem1}
\end{equation}
where
$$\mathcal{I}(\x)=\big(z-\E{f(\x)}\big)^T \big(\Sigma_e+\Sigma_\eta+\Var{f(\x)}\big)^{-1} \big(z-\E{f(\x)}\big),$$
and
$$\impf{2}{\x}
=\big(\phi(z)-\E{\phi(f(\x))}\big)^T \big(\textbf{1}_m+ \Var{\phi(f(\x))}\big)^{-1}
\big(\phi(z)-\E{\phi(f(\x)}\big).$$

\end{proof}

\section{Appendix: Figures for Process-tuning for climate models}
\subsection{Ensemble plots}

\begin{figure}[H]
        \centering
        \begin{minipage}[t]{1\textwidth}
      \includegraphics[width=1.1\textwidth]{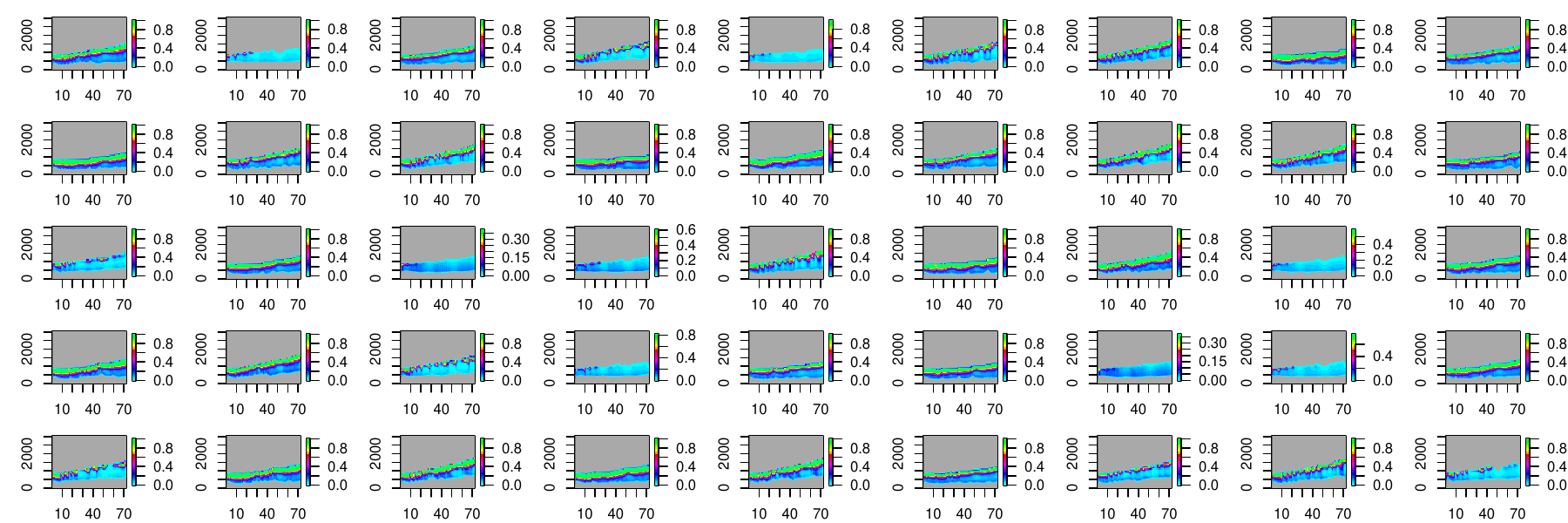}
     \end{minipage}
     \begin{minipage}[t]{1\textwidth}
        \includegraphics[width=1.1\textwidth]{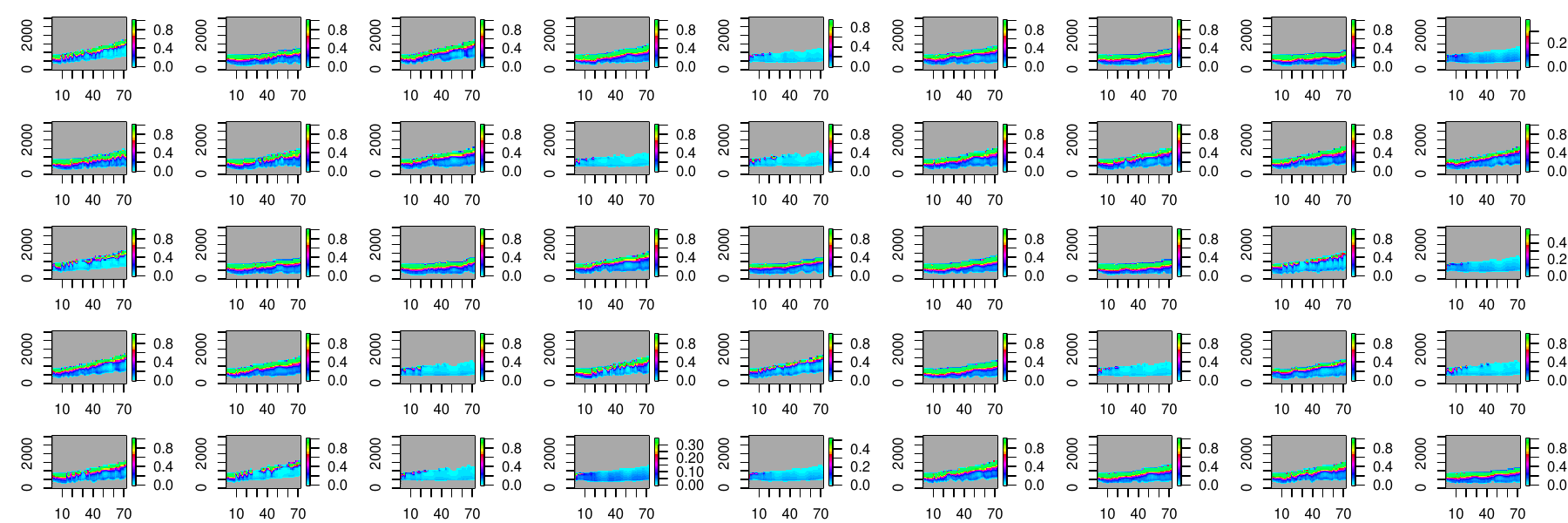}
        \end{minipage}
\caption{Wave 1 Ensemble runs from SCM simulators: the ensemble outputs are plotted ordinarily from the 1st run to the 90th. For each plot, it shows the hourly averages of the cloud fraction profiles during 72 hours of SCM simulation. }
\label{appendix:wave1}
\end{figure}

\begin{figure}[H]
        \centering
        \begin{minipage}[t]{1\textwidth}
      \includegraphics[width=1.1\textwidth]{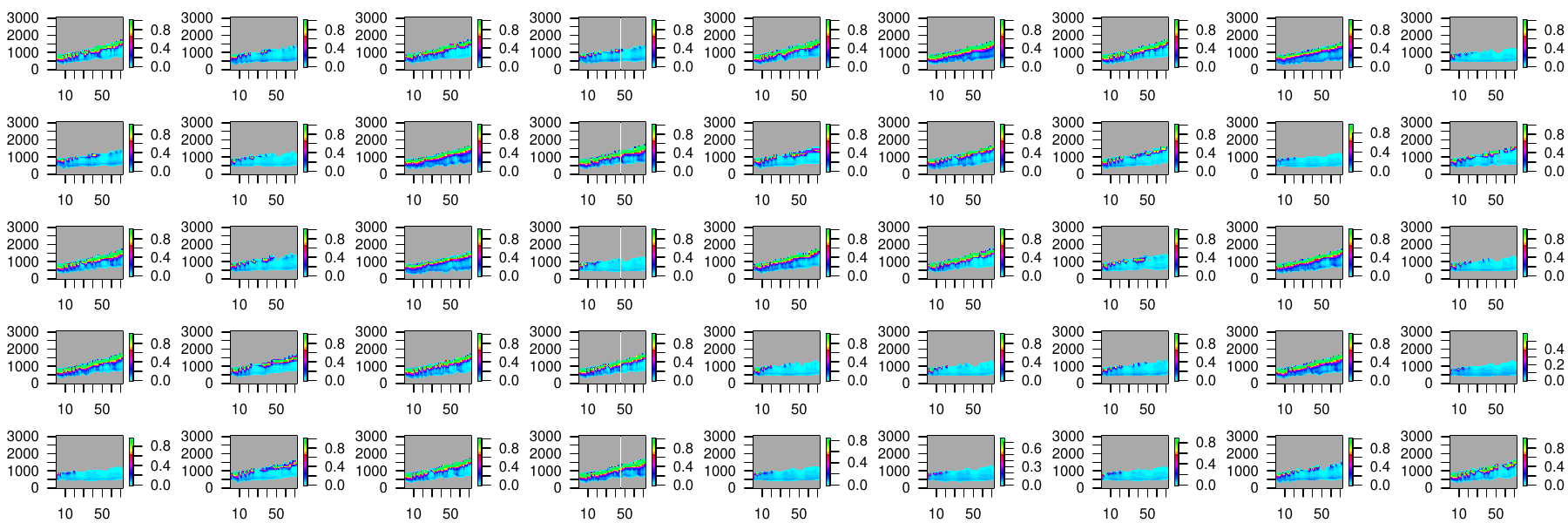}
     \end{minipage}
     \begin{minipage}[t]{1\textwidth}
        \includegraphics[width=1.1\textwidth]{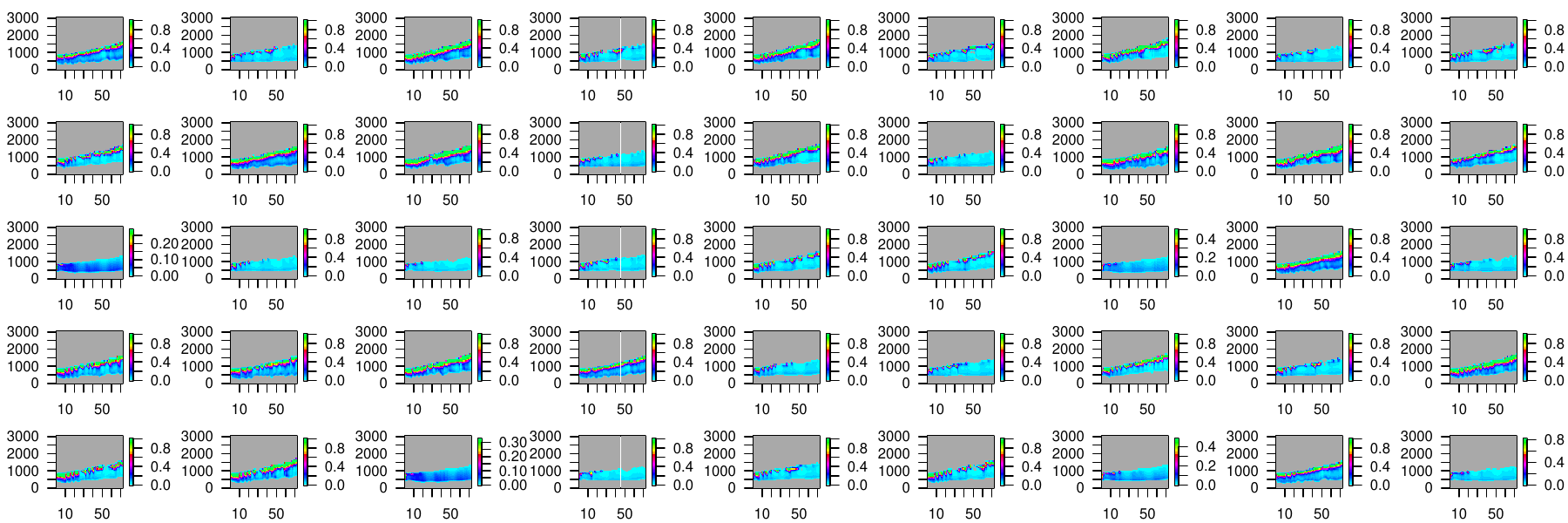}
        \end{minipage}
\caption{Wave 2 Ensemble runs from SCM simulators: the ensemble outputs are plotted ordinarily from the 1st run to the 90th.}
\label{appendix:wave2}
\end{figure}

\begin{figure}[H]
        \centering
        \begin{minipage}[t]{1\textwidth}
      \includegraphics[width=1.1\textwidth]{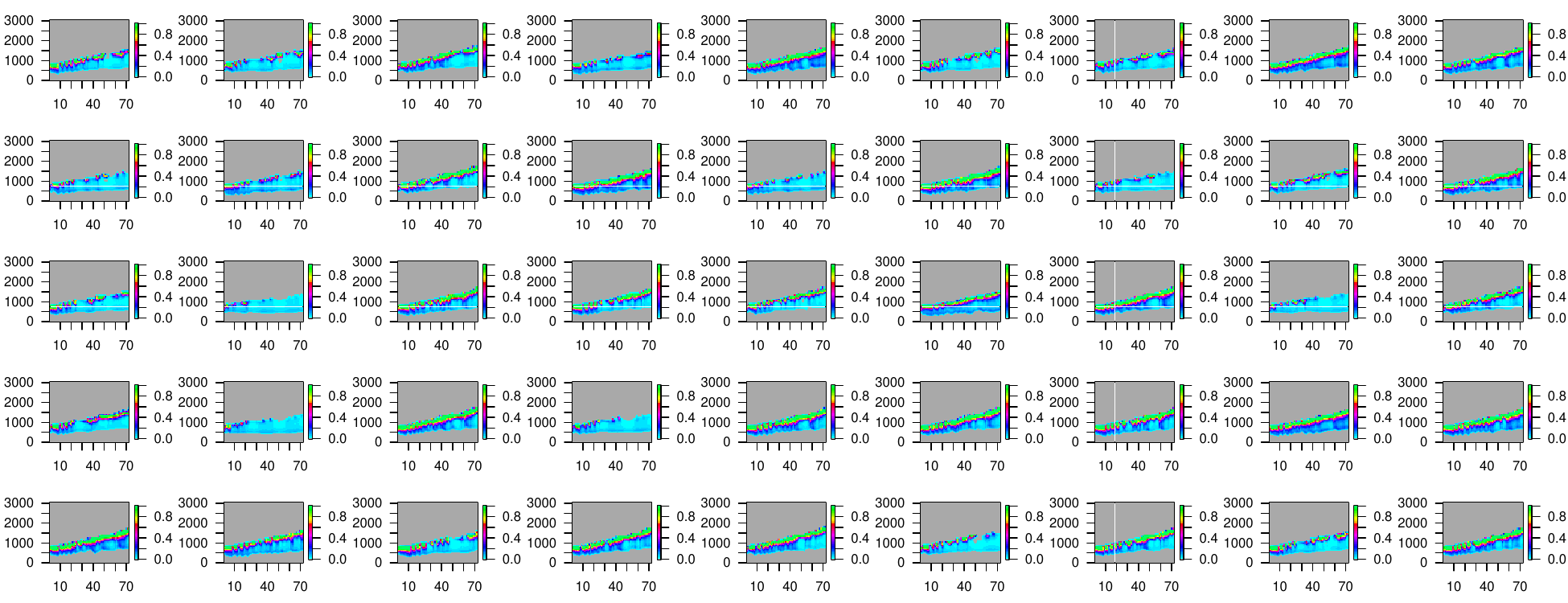}
     \end{minipage}
     \begin{minipage}[t]{1\textwidth}
        \includegraphics[width=1.1\textwidth]{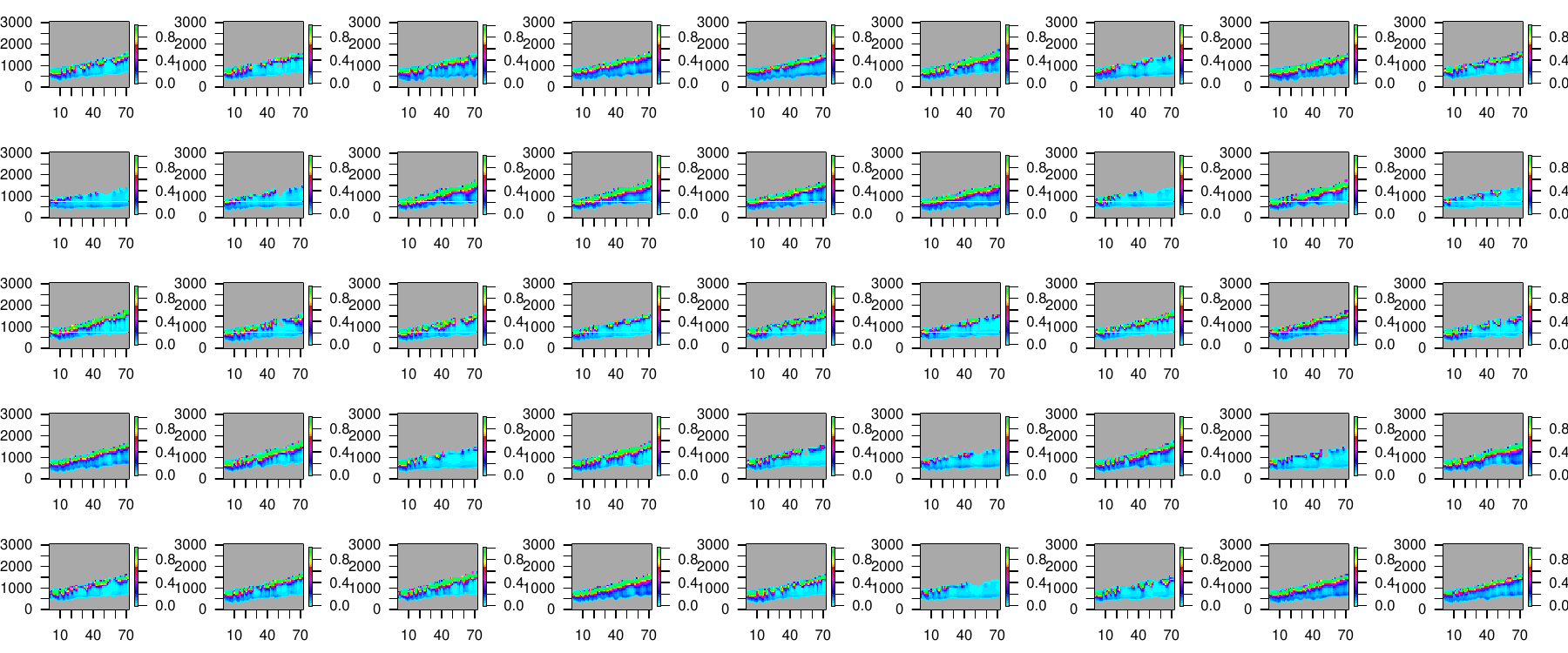}
        \end{minipage}
\caption{Wave 3 Ensemble runs from SCM simulators: the ensemble outputs are plotted ordinarily from the 1st run to the 90th.}
\label{appendix:wave3}
\end{figure}

\subsection{Acceptable runs plots}

\begin{figure}[H]
\centering
\includegraphics[width=14cm]{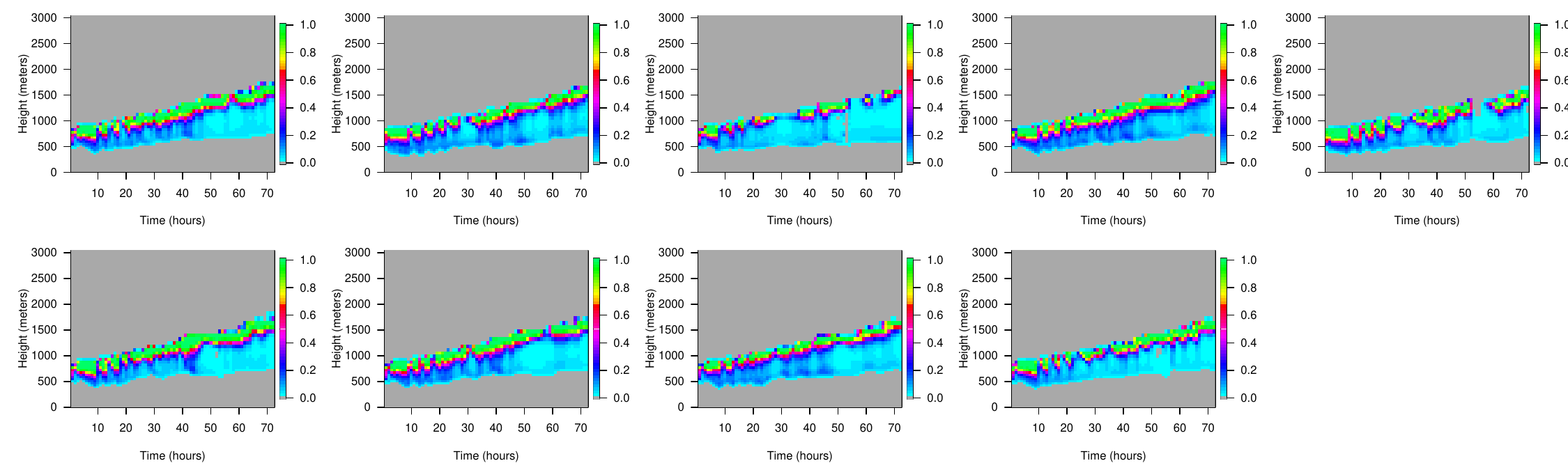}
\caption{The acceptable runs by expert's selection for wave 2.}
\label{appendix:63}
\end{figure}
\begin{figure}[H]
\centering
\includegraphics[width=14cm]{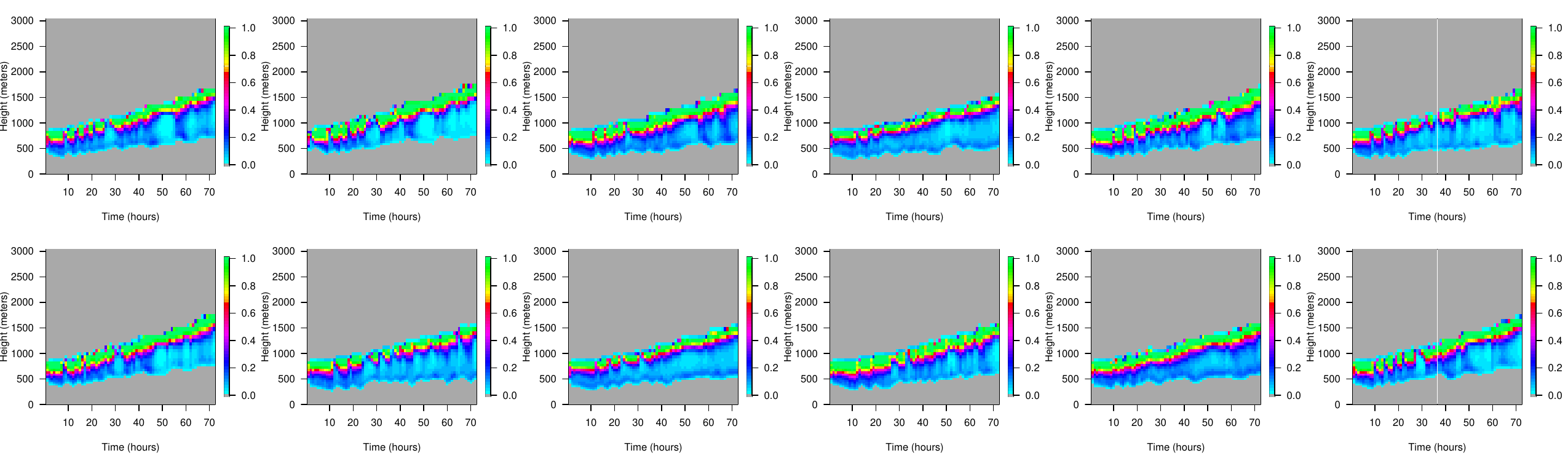}
\caption{The acceptable runs by expert's selection for wave 3.}
\label{appendix:63}
\end{figure}

\section{R shiny application}

R \texttt{Shiny} app is created for the LMDZ model calibration, aiming to consider the modeller’s information in the calibration process. In this appendix, we present the contents of the app. For interaction purpose, experts will use the app to look at the model output and accept or reject.
Hence, the \texttt{Shiny} app we created includes three pages, page 1 shows the observed field and 90 ensemble member plots, page 2  is the selection page where the experts need to choose their acceptable runs, and page 3 is used to do a final check and save the experts selection.

\begin{figure}[H]
\centering
\includegraphics[width=15cm]{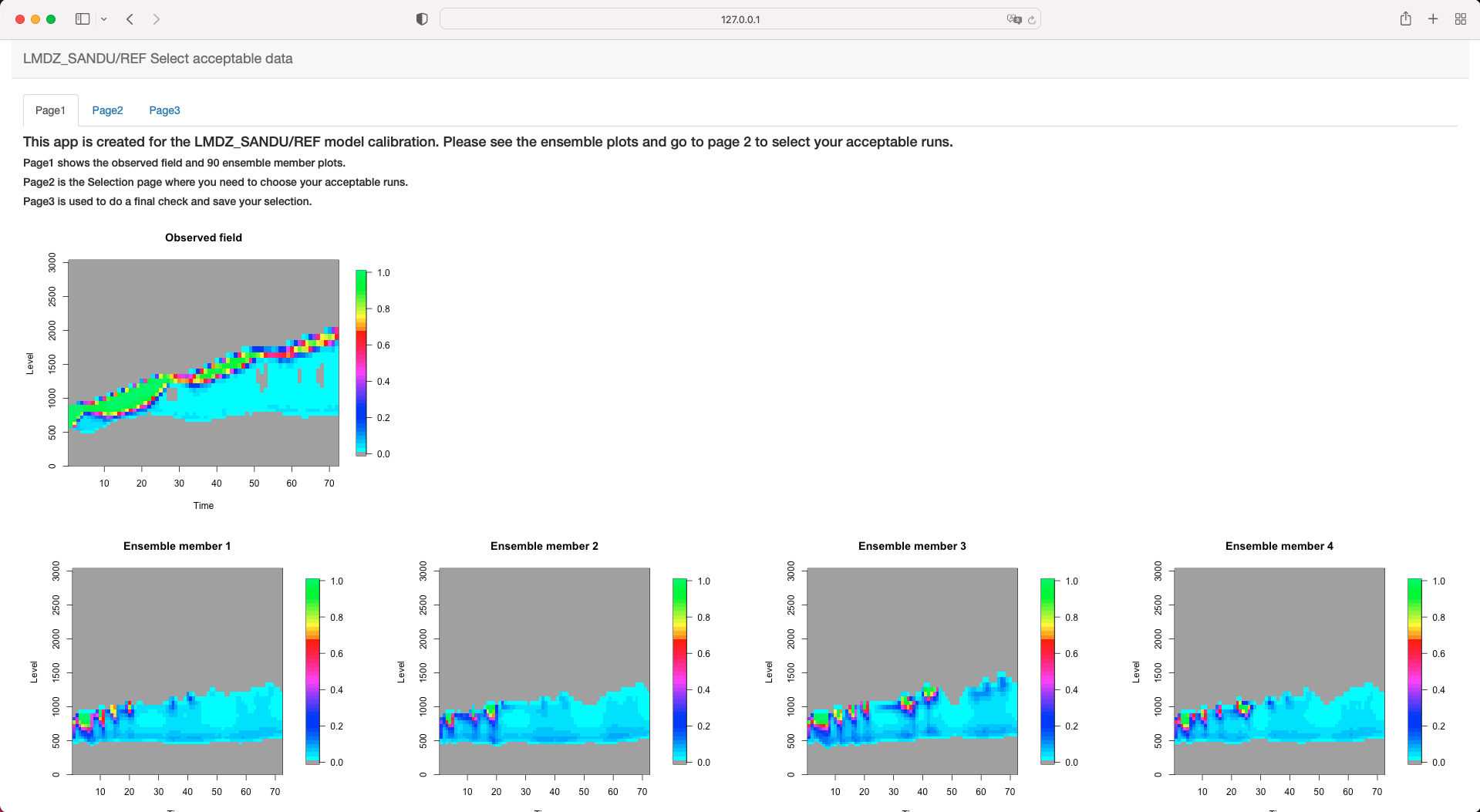}
\caption{Page 1: Overall of the ensemble.}
\label{page1}
\end{figure}

The first page of our app is presented in figure 
\ref{page1}. On the top of this page, there are three buttons with page numbers that can be used to switch the pages. Under the brief overview, the observation is fist presented, and the 90 model runs are plotted (the full page 1 is too long, we only paste part of it here, the rest can be seen by using the app). By looking at all ensemble members on page 1, experts can get an idea of which runs look best before moving to the accept reject page. 

\begin{figure}[H]
\centering
\includegraphics[width=15cm]{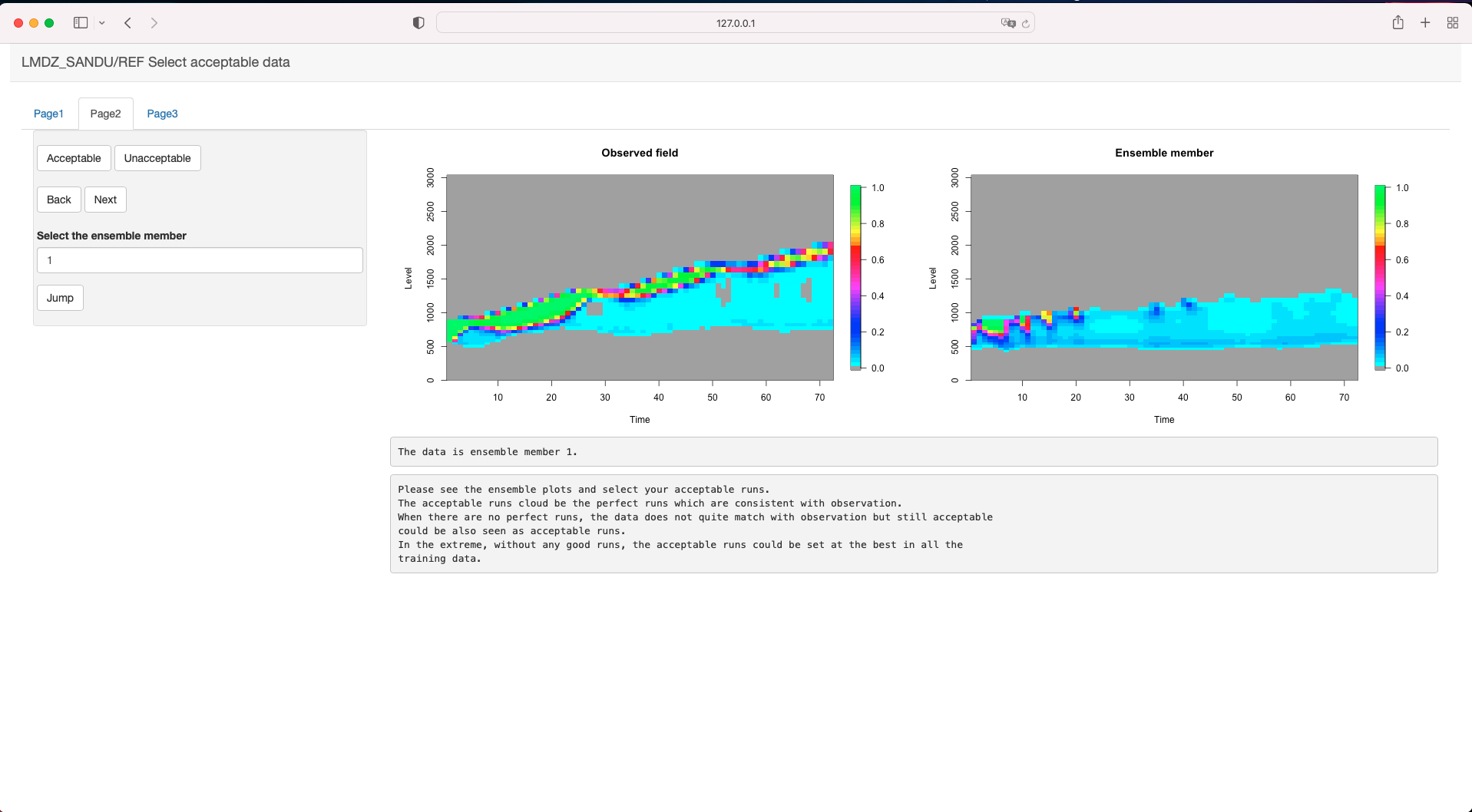}
\caption{Page 2: Selection page. }
\label{page2}
\end{figure}
The page 2 is presented in figure \ref{page2}. 
On the right panel, the left figure shows the observation, and the right figure shows the ensemble member (from the first one to ninetieth). Once the experts click the acceptable/unacceptable button on the left panel, then ensemble member will change to the next one. The ``Jump'', ``Back'' and ``Next'' buttons could be used when the experts want to correct their decisions. The app will only save their final decision. 

\begin{figure}[H]
\centering
\includegraphics[width=15cm]{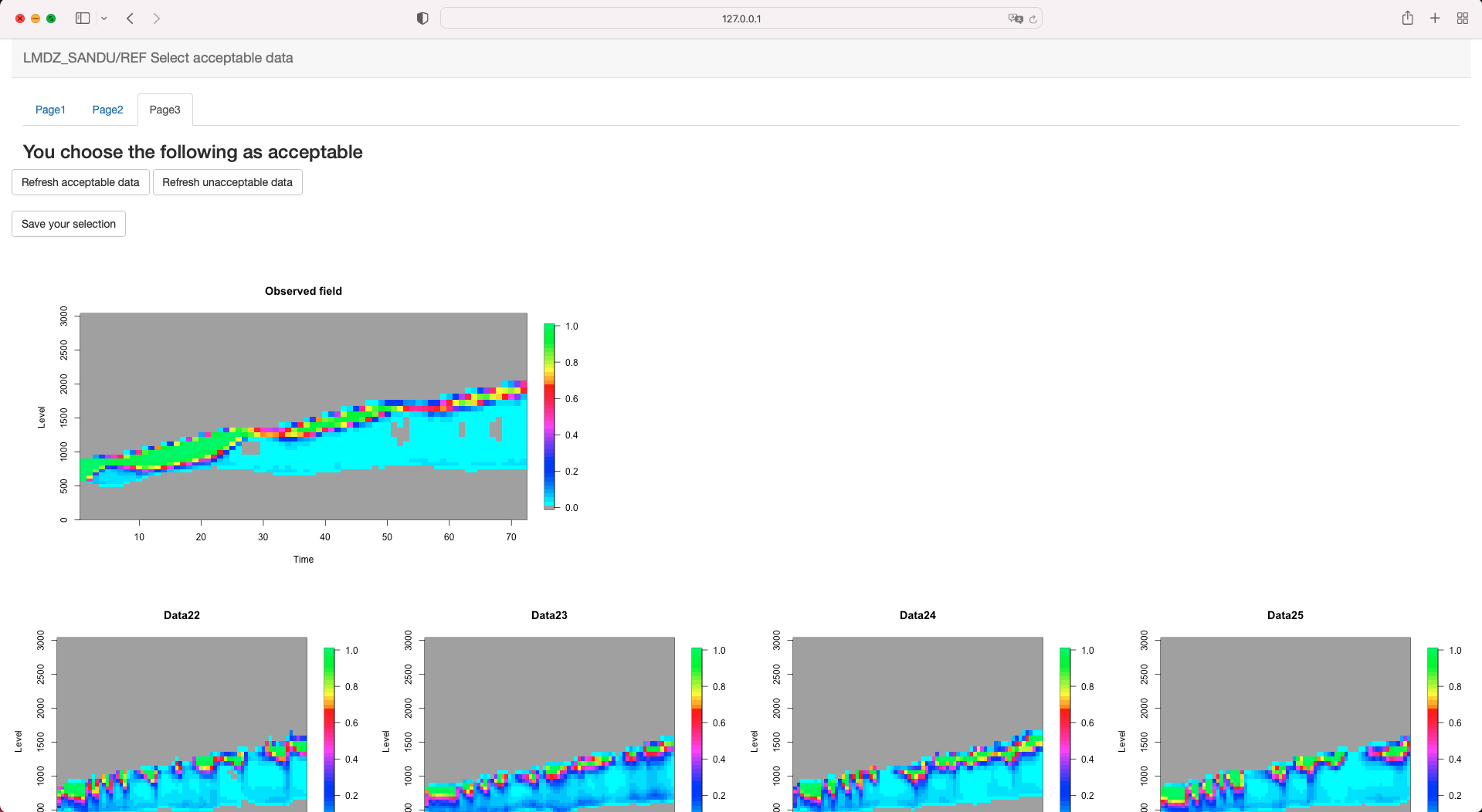}
\caption{Page 3: Final check and save the data.}
\label{page3}
\end{figure}

The page 3 is presented in figure \ref{page3} to show all of the experts selections that made in page 2. The selection will be saved by clicking the ‘save your selection’ button.  If the experts are not sure about their selection, they can go back to page 2 and type the unsure ensemble member, then they can easily compare the observed field with this ensemble member and correct the choice.

After all the steps introduced above, a csv file called ‘Acceptable.csv’ will be automatically generated. This csv file will save expert's acceptable runs as 1, expert's unacceptable runs as 2 and 0 means experts did not make any selection for this ensemble member.

\bibliographystyle{abbrvnat}
\bibliography{references}
\end{document}